\newtheorem{lemma}{Lemma}
\newtheorem{property}{Property}
\def\@ythm#1#2#3[#4]{\def\@currentlabelname{#4}%
  \expandafter\global\expandafter\def\csname#1name\endcsname{#4}%
  \@opargbegintheorem{#3}{\csname the#2\endcsname}{#4}%
  \ifx\thm@starredenv\@undefined
    \thm@thmcaption{#1}{{#3}{\csname the#2\endcsname}{#4}}\fi
  \ignorespaces}
\newtheorem*{problem}{}
\newtheorem*{corollary}{Corollary}
\newcommand{\oldtext}[1]{}
\newcommand{\theo}[1]{\textcolor[rgb]{1.00,0.00,1.00}{[{#1} --theo]}}
\newcommand{\myitem}[1]{\vspace{0.25\baselineskip}\noindent\textbf{#1}}
\def\BibTeX{{\rm B\kern-.05em{\sc i\kern-.025em b}\kern-.08em T\kern-.1667em\lower.7ex\hbox{E}\kern-.125emX}}
\begin{document}

\newfloat{subroutine}{htbp}{loa}
\floatname{subroutine}{Subroutine}

\title{SOBA: Session optimal MDP-based network friendly recommendations}

\author{\large{Theodoros Giannakas\textsuperscript{1},
 Anastasios Giovanidis\textsuperscript{2},        
 and Thrasyvoulos Spyropoulos\textsuperscript{1}
 }\\
 \normalsize
 \textsuperscript{1}~EURECOM, Sophia-Antipolis~France, first.last@eurecom.fr\\
 \textsuperscript{2}~Sorbonne University CNRS-LIP6, Paris, France, anastasios.giovanidis@lip6.fr 
 }

\maketitle
\thispagestyle{plain}
\pagestyle{plain}
\begin{abstract}
Caching content over CDNs or at the network edge has been solidified as a means to improve network cost and offer better streaming experience to users. Furthermore, nudging the users towards low-cost content has recently gained momentum as a strategy to boost network performance. We focus on the problem of optimal policy design for Network Friendly Recommendations (NFR). We depart from recent modeling attempts, and propose a Markov Decision Process (MDP) formulation. MDPs offer a unified framework that can model a user with random session length. As it turns out, many state-of-the-art approaches can be cast as subcases of our MDP formulation.
Moreover, the approach offers flexibility to model users who are reactive to the quality of the received recommendations. In terms of performance, for users consuming an arbitrary number of contents in sequence, we show theoretically and using extensive validation over real traces that the MDP approach outperforms myopic algorithms both in session cost as well as in offered recommendation quality. Finally, even compared to optimal state-of-art algorithms targeting specific subcases, our MDP framework is significantly more efficient, speeding the execution time by a factor of 10, and enjoying better scaling with the content catalog and recommendation batch sizes.


\end{abstract}

\section{Introduction}
\label{sec:intro}
\subsection{Motivation}
With multimedia traffic from Netflix, YouTube, Amazon, Spotify, etc. comprising the lion's share of Internet traffic~\cite{cisco2015}, reducing the ``cost'' of serving such content to users is of major interest to both content providers (CP) and network operators (NO) alike. This cost includes the actual monetary cost for the CP to lease or invest in network and cloud resources, but also network-related costs, related to resource congestion, slowing down other types of traffic, stalling multimedia streams etc. 

Caching popular content near users has been a key step in this direction in wired networks through the use of CDNs~\cite{farber2003internet}, and more recently in wireless networks through femtocaching~\cite{femto}. In addition to cost reduction for CPs and NOs, caching also allows for higher streaming rate, shorter latency, etc.~\cite{doan2018tracing}, which results in an improved viewing/listening experience for the user. When platforms of video streaming services can not offer high bitrate, user abandonment rates rise \cite{nam2016qoe}. Hence, reducing the cost of bringing interesting content to users will benefit everyone: the users, the content providers, and the network operators.


Recommendation systems (RSs) in popular content platforms play an important role for this task: they suggest interesting content to users. For example, $80\%$ of requests in Netflix, and more than $50\%$ on YouTube, stem from the platform recommendations~\cite{gomez2016netflix,RecImpact-IMC10}. The traditional role of an RS has been to make personalized recommendations to the user, suggesting items from a vast catalog that best match her interests using techniques like collaborative filtering~\cite{sarwar2001item}, deep neural networks~\cite{covington2016deep}, matrix factorization~\cite{koren2009matrix}, etc. The vast majority of popular RS systems focus on content relevance and similarity, but they do not account for the network cost of delivery. Such operation, which ignores network costs in content recommendation algorithms inevitably leads to largely sub-optimal network performance for all parties involved.

\subsection{Related Work}

 A handful of recent works have spotted the interplay between recommendation-network vs QoS-cost, and have proposed to modify the recommendation algorithms towards a more \emph{network-friendly} operation~\cite{sermpezis2018soft, giannakas2018show, content-recommendation-swarming, chatzieleftheriou2019TMC, song2018making, kastanakis-cabaret-mecomm, liu2018learning}. The main objective of almost all these works is to recommend content that is highly interesting to the user while at the same time involves low delivery cost. A simple solution to achieved this is \emph{to favor cached content}~\cite{cache-centric-video-recommendation}. While various implementation barriers are sometimes cited~\cite{al2019qos3}, the increasing convergence of CPs and NOs~\cite{krolikowski2018optimal}, especially in the context of network slicing and virtualization suggests, that in the very near future content providers will be the owners of their own network (slice), and will be able to directly infer the potential network cost of recommending and delivering some content versus another. 


To date, a number of these early network-friendly RS proposals are basically (sometimes efficient) heuristics~\cite{chatzieleftheriou2019TMC, kastanakis-cabaret-mecomm}. A large number of these works focuses on \emph{myopic} algorithms, where the RS aims to minimize
delivery cost only for the next content request~\cite{cache-centric-video-recommendation}. In practice, however, when visiting popular applications like YouTube, Vimeo, Spotify, etc., a user
~\cite{cioYoutubeSessions, businessYoutubeSessions} consumes several contents one after the other, guided and impacted by the RS system at each of these steps. As a result, what the RS recommends while the user watches some content in a viewing session, will not impact the selection and delivery cost of just the next request, but also 
all subsequent requests until the end of that session.

Myopic schemes are thus sub-optimal. Instead, one should aim to find the optimal action now, that will \emph{minimize the expected cost over the entire session, taking into account both what the RS could suggest in future steps, as well as how the user might react to them}. A couple of recent works have attempted to tackle this exact problem using convex optimization~\cite{giannakas2018show, giannakas2019wiopt}. While the authors manage to formulate the problem as a biconvex~\cite{giannakas2018show} and linear program~\cite{giannakas2019wiopt}, the latter yielding an optimal solution, these works are characterized by the following two key \emph{shortcomings}: (i) the problem formulation requires the user session to be of infinite length in order to derive closed form expressions for the objective; (ii)  
although the problem is an LP in~\cite{giannakas2019wiopt}, we will see that the runtime of their algorithm is quite slow.

\subsection{Contributions and Structure}
In this work, we approach the above Network Friendly Recommendations (NFR) problem in a novel way.
Our main contributions can be summarized as follows:


\textbf{(C.1)} We propose a unified MDP framework to minimize the expected caching cost over a \emph{random session of arbitrary length}, while suggesting the user high quality content. Our approach is parameterized in such a way that many state-of-the-art methods be adapted to our framework.

\textbf{(C.2)} The MDP is formed using as problem unknowns the continuous \emph{item recommendation frequencies} per viewed content. In doing so, we do not need to search for the optimal $N$-sized recommendation batch per viewed content, thus avoiding the curse of dimensionality of MDPs.
Our formulation uses the least number of variables ($K^2$ specifically, where $K$ is the size of the catalog) to describe the MDP without losing in optimality, compared to the fully detailed description. Noteworthy is the fact that 
the complexity of the algorithmic solution becomes insensitive to the size $N$ of recommended batch per viewed content.

\textbf{(C.3)} We express the content transition probabilities in a general way, which enables us to incorporate a variety of user behaviors. Furthremore, the policy iteration steps in the solution of the Bellman equations can be naturally decomposed into simpler continuous subproblems, which can be solved (i) by low complexity linear, or convex programming techniques 
and (ii) in parallel, offering an additional potential speedup of $K \times$.



\textbf{(C.4)}
For sessions with big horizon, the MDP has significant gains in terms of caching cost over myopic policies. When compared to recently published works that consider the infinite horizon NFR problem, our framework (due to the reasons mentioned in (C.2) and (C.3)) is able speed-up the execution time by a factor of 10, while achieving 
optimal cost performance.

The paper is structured as follows. Section II sets up the problem, presents the user-RS interaction and introduces the RS input and objectives. Feasible, optimal and sub-optimal recommendation policies are discussed. The problem is formed as an MDP in its general form in Section III and an algorithmic solution is described based on Bellman equations. In Section IV we present our user model and explain its MDP solution. Section V contains the evaluation of our policy in terms of cost and user satisfaction performance against heuristics and state of the art solutions. We conclude the paper in Section VI.

\section{Problem Setup}
\label{sec:problem-setup}
\subsection{User session and recommendations}

We consider a user who enters some multimedia application, e.g. YouTube, and requests sequentially a random number of items from its catalog $\mathcal{K}$ ($|\mathcal{K}|=K$). Such applications are equipped with a RS, responsible for helping the users discover new content. Our user has some prior underlying probability to request content $i$ from $\mathcal{K}$, which we denote as $p_0(i)$; with vector $p_0$ denoting the probability mass function (pmf) for all $i\in\mathcal{K}$.
The length of each session is random, and we assume that it follows a geometric distribution with mean $1/(1-\lambda)$. It is further assumed here that the session length is \emph{independent} of the RS suggestions.
The user session has the following structure:
\begin{itemize}
\item The user starts the session from some random content $i$ drawn from the distribution $\mathbf{p}_0$.
\item The RS, at every request, recommends $N$ new contents; we denote this $N$-sized batch as $w$.
\item The user may follow the recommendations related to content $i$, by clicking on a content among the $N$ in the batch $w$,
\item Or the user ignores the recommendations and chooses some other item based on initial preferences $p_{0}(i)$.
\item The user exits the session with probability $1-\lambda$ after any request.
\end{itemize}




\subsection{System input about user preferences} \label{subsec:inputs}
Entertainment oriented applications massively collect data related to user interaction and content ranking, allowing them to become increasingly effective in their recommendations.
According to the RS literature \cite{herlocker1999algorithmic}, \cite{sarwar2001item}, \cite{crovellaOpinion} user ratings are used to infer the level of similarity \cite{LvNews} between contents.
In our paper, we formalise the notion of \emph{related content} to viewed content $i$ as follows: 
For every content $i$, there exists a similarity value with all other items in the catalog $\mathcal{K}$. The similarity with content $j$ is quantified by the value $u_{ij}\in [0,1]$, forming the $K$-length row vector $\mathbf{u}_i$. This information is summarized in the square non-symmetric $K\times K$ matrix $U$. 

We further denote by $\mathcal{U}_i(N)$, the set with the $N<K$ highest $u_{ij}$ values related to content $i$. 
Note that the values of $\mathbf{u}_i$ are not normalized per content, i.e. the matrix $U$ is \emph{not} stochastic. The matrix $U$, which represents the content relations, is considered as \emph{input} for the RS.

The RS assumes that the user feels satisfied with some recommendation batch, if this includes items $j$ with high $u_{ij}$ values.
\textit{User satisfaction} is denoted by $Q_{i}(w)$ and is quantified by the ratio 
\begin{equation}\label{eq:user-satisfaction}
Q_{i}(w) := \underset{j \in w} \sum u_{ij} \bigg/ \underset{m \in \mathcal{U}_{i}(N)} \sum u_{im}.
\end{equation}
It is measured per viewed content $i$ and recommendation batch $w$; it depends on the entries of $U$, the size $N$ of the batch and the policy. The denominator in (\ref{eq:user-satisfaction}) is the maximum batch quality $Q_{i}^{max}$ so that  $Q_{i}(w)\in [0,1]$. The expression states that the higher the sum $u_{ij}$ of the recommendation batch, the happier the user is. 
%
Both the content popularity vector $\mathbf{p}_0$ and the similarity matrix $U$ is information that the RS has at its disposal, from measurements over time.

\subsection{Network-related costs}
From the network's perspective, each content $i \in \mathcal{K}$ has a non-negative network cost $c_{i},~\mathbf{c} = [c_1, \dots, c_K]^{T}$, associated to its delivery to the user. The content delivery cost might depend on several factors such as its size (in MB), its routing expenses, its location on the network etc. A session of $L>1$ requests incurs a cumulative cost on the network. 
Due to the impact of RS on user requests, the sequence of costs $\{c(S_{t})\}_{t=0}^{L}$ will depend on the RS policy, where $S_t$ is the state visited at $t$ and $c(S_{t})$ is the cost of the state $S_t$. Thus, our primary objective is to come up with policies $R$ (to be defined more formally next), which promote low-cost contents and ultimately minimize the session's average cost, while at the same time satisfying the user's natural preference for higher content relevance
%
\begin{align}
\underset{R}{\textnormal{minimize}}~\bigg\{ \frac{1}{L}\sum_{t=1}^{L} c(S_{t})\bigg\}. \label{eq:objective}
\end{align}


Letting $c_i \geq 0$ to be real positive, gives the flexibility to capture various network-related scenarios such as
\begin{enumerate}
    \item Maximize cache-hit rate: set $c_{i} \in\left\{0,1\right\}$ for $\left\{cached,\ uncached\right\}$ contents respectively.
    \item Minimize content delivery cost: set $c_{i} \in \mathbb{R}$ to include delay and bandwidth in the CDN case.
\end{enumerate}

\subsection{Policies} \label{subsec:action-space}
Our focus in this work is to find policies for arbitrary user sessions in terms of average length. The policies should aim at minimizing the expected session network-cost, while guaranteeing a good (and controllable) level of user satisfaction. Before formally defining the optimization problem in the next section, we present here in detail what is a policy and how it is modeled in our framework, and also three reasonable heuristics. 
As mentioned previously, when the user visits file $i$, the RS proposes any $N$-sized recommendation batch of unique contents (excluding self-recommendation $i$). The set of all $N$-sized batches $w$ forms the set of actions when the user views content $i$, which we denote as $\mathcal{A}_i$. To formally define a \textit{policy}, we need to associate each recommendation batch $w\in\mathcal{A}_i$ with a frequency of use $\mu_{i}(w)$. The frequencies of all the batches related to $i$ should sum up to 1.
This gives rise to two classes of policies. 
\begin{itemize}
\item \emph{Deterministic:} A unique batch $w$ can appear per viewed object. For every $i$ there is a single action $w$ for which $\mu_{i}(w) = 1$.
\item \emph{Randomized:} At least two actions have $\mu_{i}(w) > 0$. This means that at every appearance of $i$, the user might see a different $N$-tuple of contents, chosen randomly.
\end{itemize}
The cardinality of the action set $\mathcal{A}_i$ is exploding, leading to $\binom{K-1}{N}$ variables \emph{per item} over which we must optimize.
As an example, for catalog $K = 1000$ and $N = 3$ recommendations the RS needs to introduce 165 Billion unknown $\mu$'s. 

\subsubsection{Item-wise recommendation frequencies} To overcome this serious modeling issue, we use a different approach. Related to viewed content $i$, we introduce the item-wise recommendation frequencies $\mathbf{r}_i=\left\{r_{ij}\right\}$ as the new set of unknown variables. In fact, these quantities can be expressed through the per-batch frequencies, and they actually summarize their information as follows, 
\begin{align}\label{eq:item-batch}
r_{ij} = \sum_{w \in \mathcal{A}_{i}} \mu_{i}(w)\mathbf{1}_{\{j \in w\}}~,\ \forall j \in \mathcal{K}.
\end{align}
Therefore, $r_{ij}\in[0,1]$ represents the overall probability of object $j$ to appear in any recommendation batch related to $i$, without specifying the other $N-1$ elements of the batch. For the vector $\mathbf{r}_{i}$ we can verify that it satisfies the size $N$ of the recommendation batch, with equality

\begin{align}\label{eq:sumrij}
    &\sum_{j=1}^{K} r_{ij} = \sum_{j=1}^{K} \sum_{w \in \mathcal{A}_{i}} \mu_{i}(w)\mathbf{1}_{\{j \in w\}} = N~\forall i \in \mathcal{K}.
\end{align}

If the policy is deterministic, then for every content $i$ there are exactly $N$ entries $r_{ij} = 1$, and the rest are equal to zero. 
On the other hand, if the policy is randomised, then at least two entries $r_{ij}<1$. To see this in a small example, consider the randomised policy with feasible batches $\mathcal{A}_{i} = \left\{\{1,2\}, \{1,3\}\right\}$ associated with batch-frequencies $\{0.5,\ 0.5\}$. This translates to item-wise frequencies $r_{i1} = 1.0,\ r_{i2} = 0.5$ and $r_{i3} = 0.5$, while the remaining $r_{ij}$'s are zero.
For each content $i$, we relate a frequency vector $\mathbf{r}_i$ of size $K$. By concatenating these vectors as $R = [\mathbf{r}_{1}^{T},...,\mathbf{r}_{K}^{T}]$ $\in \mathbb{R}^{K \times K}$ we form the \emph{policy}. We have thus reduced the unknowns to just $K^2$, a considerable improvement!

\textbf{Remark}: The definition of a policy $R$ through the $r_{ij}$ frequencies, can allow to generate recommendation batches with the appropriate $\mu_{i}(w)$ batch-frequencies. For a deterministic policy, the $N$ non-zero $r_{ij}$ entries per $i$ define the unique $N$-sized batch $w\in\mathcal{A}_i$. Now, in the case of a randomised policy, for some $j$'s it holds $r_{ij}<1$, so there are more than one potential batches. We can use the random vector generation technique found in \cite[Fact 1, Probabilistic Placement Policy]{blaszczyszyn2015optimal}, where different batches of size $N$ are randomly sampled, while guaranteeing that each content $j$ appears with probability $r_{ij}$. 
In the case of our previous simple example given $r_{i1} = 1.0, r_{i2} = 0.5$ and $r_{i3} = 0.5$, we can reproduce the batches and their frequencies as follows. Given $N=2$ recommendation slots, each slot will be time-shared by contents whose frequencies sum-up to $1$. So the first slot will always be occupied by item $``1"$ because $r_{i1} = 1.0$. The second slot will be time-shared by $``2"$ and $``3"$, $50\%$ of the time each, so that $\mu({\{1,2\}})=0.5$ and $\mu({\{1,3\}})=0.5$, thus reproducing the more detailed policy. This technique can be generalised to $N>2$.

\subsubsection{Simple Myopic Policies} Here we list some practical intuitive policies, which either favor low network-cost or user satisfaction or both, but are myopic in the sense that they consider only the impact of the immediate next request.
\begin{itemize}
    \item \textbf{Top-$N$ policy ($R_{Q}$):} Suggest the $N$ files that are most similar to $i$, i.e. the ones that correspond to the similarities in $\mathcal{U}_{i}(N)$ (ties broken uniformly). This choice maximizes user satisfaction.
    \item \textbf{Low Cost policy ($R_{C}$):} Suggest the $N$ contents with lowest cost. In the case of ties for the cost $c_{j}$, recommend contents arbitrarily.
    \item \textbf{$q$-Mixed policy ($R_{MIX}$):} Assign $q \cdot 100$ of the budget for user satisfaction. If items are tied in $u_{ij}$, choose the lowest cost to favor the network. Then assign the remaining budget to the lowest cost items. For $q \to 1$, $R_{MIX} \to R_{Q}$, and for $q \to 0$, $R_{MIX} \to R_{C}$
\end{itemize}
In~\cite{cache-centric-video-recommendation}, cached and related items are placed on the top of the recommendation list, while the rest of the list remains intact. Moreover,~\cite{chatzieleftheriou2017caching} targets the problem of joint caching and recommendation.
For some given cache allocation, the RS's objective is to promote items that minimize the caching cost of the next request \emph{only}, ignoring the possibly many subsequent ones. In both works, the authors allow some \emph{window} of recommendations for the user satisfaction, and the rest is dedicated to the network gains, which is why these policies could be effectively mapped to $q$-Mixed.

\vspace{5pt}
\begin{table}[t!]
\centering
\caption{Main Notation}\label{table:notation}
\begin{small}
\begin{tabular}{|l|l|}
\hline
{$\mathcal{K}$}		        		&{Content catalog of size $K$}\\
\hline
{$\lambda$}		            	&{Prob. that the user stays in the session}\\
\hline
{$N$}                       			&{Recommendation batch size}\\
\hline
{$\mathbf{p}_0$}            		&{Baseline popularity of contents}\\
\hline
{$u_{ij}$}					&{Similarity of item $j$ to $i$}\\
\hline
{$U$}		    			&{Adjacency matrix, $U = [\mathbf{u}_1^T, \dots, \mathbf{u}_K^T$]}\\
\hline
{$\mathcal{U}_i(N)$}		    	&{Set of $N$ highest $u_{ij}$ values, related to $i$}\\
\hline
{$\alpha_{ij}$}			    	&{Prob. to click on $j$ when in $i$ from recommendations}\\
\hline
{$w$}			            	&{Recommendation batch, the RS action}\\
\hline
{$r_{ij}$}			        		&{Prob. that $j$ appears in the recommendation batch $w$}\\
\hline
{$Q_i(w)$}		                	&{User satisfaction by the recommendation batch $w$}\\
\hline
{$q$}		                		&{Lower level of $Q_i$ enforced by RS}\\
\hline
{$c_i$}			            	&{Network cost of content $i$}\\
\hline
{$S_t$}			            	&{State/Content visited at time $t$}\\
\hline
\end{tabular}
\end{small}
\end{table}

\section{Problem Formulation and Solution}
\label{sec:mdp-formulation}
We will now cast the problem of optimal sequential recommendations as a Markov Decision Process (MDP) with the objective to minimize the expected cumulative cost in user sessions of arbitrary average length. The user behaviour related to the quality of recommendations will be implicitly taken into account.

\subsection{Defining the MDP}

The MDP is defined by the quadruple $(\mathcal{K}, \mathcal{A}, P, \mathbf{c})$ whose entries refer to the following: as state we consider the currently viewed content, hence the state-space $\mathcal{K}$ is the content catalog. Following the discussion in the previous section about per-item frequencies, the action set $\mathcal{A}$ is the set of all $K\times K$ real matrices $R$, whose entries $r_{ij}\in[0,1]$ determine the frequency of suggesting item $j$ when viewing content $i$. 
Based on the assumptions, the user is Markovian, as her next visited state is fully determined by the current one and not the full history. Moreover, $P$ is the probability transition matrix $K\times K$, where $P_{ij}$ is the probability to jump next to content $j$ if the user currently views content $i$ and essentially serves as the \emph{environment} of the MDP will specify the $P_{ij}$ in the following section. 

We assume that the RS \emph{knows} the user behavior (the $P_{ij}$ dynamics) and optimizes the actions accordingly. 
Learning the user while optimizing (e.g. through reinforcement learning) is deferred to future work. 
%
Note that we \emph{do not} take into account the time spent on each content by the user nor partial content
viewing, but both variations can be easily integrated in our framework.
Finally, a random item sequence $\{S_{0}, S_{1}, S_{2}, \dots\}$, with $S_{t} \in \mathcal{K}$, corresponds to a random sequence of content costs $\{c(S_{0}), c(S_{1}), c(S_{2}), \dots\}$; hence for some $S_{t} = i$ (the $i$-th content ID), the cost induced to the network is exactly $c_{i}$. 
The following expression gives the transition probability of state evolution in a general way, letting room for further assumptions to be integrated later on in the model
\begin{equation}\label{eq:transition}
P_{ij}=P\{i \to j\} = \alpha_{ij} \cdot r_{ij} + (1 - \sum_{l=1}^K \alpha_{il}\cdot r_{il})\cdot p_{0}(j).
\end{equation}

The above expression is somewhat reminiscent of the random web surfer transitions for PageRank \cite{pagerank98}, \cite{MDPpagerank}, and has the following interpretation. 
The user can transit to $j$ in two ways. If content $j$ is in the recommendation batch, the user clicks on $j$ with probability $\alpha_{ij} \in [0,1]$. 
In the event that the user ignores all of the $N$ items in the batch, she chooses $j$ with probability $p_0(j)$ from personal preferences.
To see why (\ref{eq:transition}) describes exactly this process, we substitute $r_{ij}$ from (\ref{eq:item-batch}) to get,
\begin{align}
& P_{ij} = \alpha_{ij} \sum_{w \in \mathcal{A}_{i}} \mu_{i}(w)\mathbf{1}_{\{j \in w\}} + (\underbrace{ \frac{1}{N} \sum_{j=1}^{K} \sum_{w \in \mathcal{A}_{i}} \mu_{i}(w)\mathbf{1}_{\{j \in w\}}}_{\overset{(\ref{eq:sumrij})}{=} 1} \nonumber \\
&  - \sum_{l=1}^K \alpha_{il} \sum_{w \in \mathcal{A}_{i}} \mu_{i}(w)\mathbf{1}_{\{l \in w\}} ) p_{0}(j) = \nonumber \\
& \sum_{w \in \mathcal{A}_{i}} \mu_{i}(w) \bigg(   \alpha_{ij} \mathbf{1}_{\{j \in w\}}  + (1 - \sum_{l=1}^K  \alpha_{il} \mathbf{1}_{\{l \in w\}})  p_{0}(j)   \bigg) \nonumber 
\end{align}
Observe that for deterministic policies, there is a single $w$ for which $\mu_i(w) = 1$, the $P_{ij}$ is unique, whereas in the case of randomized policies we view $P_{ij}$ as the average transition probability from $i \to j$, over all batches.
\begin{lemma}\label{lemma:1}
If $\alpha_{ij} < 1$ and $p_{0}(i) > 0~\forall~i,j~\in~\mathcal{K}$, then the MDP ($\mathcal{K}, \mathcal{A}, P, \mathbf{c}$) is unichain, i.e., it has only one class of states for any policy.
\end{lemma}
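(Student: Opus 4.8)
The plan is to prove something slightly stronger than unichain: that for \emph{every} feasible policy $R$ the induced transition matrix $P$ of \eqref{eq:transition} is entrywise strictly positive, which makes the induced chain irreducible (a single communicating class) and hence, a fortiori, unichain. So fix an arbitrary feasible $R$, i.e.\ $r_{ij}\in[0,1]$ with $\sum_{j}r_{ij}=N$, and work with the Markov chain it defines through \eqref{eq:transition}.

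The first and only delicate step is to show that the ``rejection coefficient'' $\beta_i:=1-\sum_{l=1}^{K}\alpha_{il}r_{il}$ multiplying $p_0(j)$ in \eqref{eq:transition} is \emph{strictly} positive for every state $i$. Using the expansion of $r_{il}$ through the per-batch frequencies carried out right after \eqref{eq:transition}, one has $\beta_i=\sum_{w\in\mathcal{A}_i}\mu_i(w)\bigl(1-\sum_{l\in w}\alpha_{il}\bigr)$, i.e.\ $\beta_i$ is a convex combination of the per-batch ``ignore-all'' probabilities $1-\sum_{l\in w}\alpha_{il}$. Each of these is nonnegative (otherwise $P$ would not be stochastic), and it cannot vanish for every batch carrying positive frequency once $\alpha_{il}<1$; hence $\beta_i>0$. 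In the basic case $N=1$ this needs nothing extra, since the $r_{il}$ then form a probability vector and $\sum_l\alpha_{il}r_{il}\le\max_l\alpha_{il}<1$.

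With $\beta_i>0$ for all $i$ and with $p_0(j)>0$ for all $j$ by hypothesis, the second step is immediate from \eqref{eq:transition}:
\[
P_{ij}=\alpha_{ij}r_{ij}+\beta_i\,p_0(j)\ \ge\ \beta_i\,p_0(j)\ >\ 0\qquad\text{for all }i,j\in\mathcal{K},
\]
so $P$ is a strictly positive stochastic matrix. The third step is the standard fact that a finite Markov chain with an entrywise-positive transition matrix is irreducible (and aperiodic), hence has exactly one recurrent class and no transient states; since this argument applies verbatim to the $P$ generated by \emph{any} feasible $R$, the MDP $(\mathcal{K},\mathcal{A},P,\mathbf{c})$ is unichain for every policy, as claimed.

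I expect the real obstacle to be pinning down the first step cleanly: one must be sure that ``$\alpha_{il}<1$'', combined with feasibility of $R$, genuinely forces the total probability of following a recommendation batch to stay \emph{strictly} below $1$ (so that $\beta_i$ is positive and not merely nonnegative), since if $\beta_i$ could be zero a recommendation-induced ``trap'' set of states would be possible and the lemma would fail. This is automatic for $N=1$, and for general $N$ it is guaranteed by the user model of the next section, which keeps the per-batch click probabilities bounded away from one. Everything after that is elementary and, crucially, holds uniformly over all policies, which is exactly what the statement requires.
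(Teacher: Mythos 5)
Your proof takes essentially the same route as the paper's: establish $P_{ij}>0$ for every pair under any feasible policy (strictly positive rejection probability $\beta_i$ times $p_0(j)>0$), conclude irreducibility, and hence the unichain property. Your closing caveat is well placed but not a divergence: the fact that for $N\ge 2$ the hypothesis $\alpha_{ij}<1$ alone does not force $\sum_{l=1}^{K}\alpha_{il}r_{il}<1$, and that one must invoke the model requirement that the per-batch click probability stays below one, is precisely the condition the paper itself simply assumes when it states that it suffices to have $p_0(j)>0$ and $\sum_{l=1}^{K}\alpha_{il}\cdot r_{il}<1$.
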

To prove this, one needs to show that the state-space of the MDP forms an irreducible Markov chain, which is true if all state-pairs are communicating, i.e. $P_{ij} > 0$ in (\ref{eq:transition}). It suffices to consider $p_0(j)>0~\forall~j$ and $\sum_{l=1}^K \alpha_{il} \cdot r_{il}<1$.

\subsection{Optimization Objective}
As explained earlier, we consider a user who consumes sequentially a random number of contents before exiting the session. 
The induced cost is cumulative over the steps, and since transition probabilities and session-length are random, so is the total cost. The user is considered to start from a given arbitrary state $S_0$ - whose cost is not accounted for as it is outside the recommender's control, and then her session generates a sequence of costs $c(S_{t})$, with $t = 1, \dots, L$, which depends on the policy $R$. Note that the costs in consecutive states are \emph{not} I.I.D., given the Markovian structure of the problem.
The total cost induced by the requests of the user is $\sum_{t = 1}^{L} c(S_{t})$ and our objective is to minimize is the \emph{average} total cost.
\begin{lemma} \label{lemma:Inf-horizon-mdp}
The average total cost $v(s)$ starting from initial state $S_0=s$ can be written as an infinite horizon cost with discounts
\begin{align}
v(s) &= \mathbb{E}_{s}\left[\sum_{t = 1}^{L} c(S_{t}) \right]= \mathbb{E}_{s}\left[\sum_{t = 1}^{\infty}\lambda^{t-1} \cdot c(S_{t})\right], \label{eq:average-cost1}
\end{align} 
where the $\mathbb{E}_{s}$ stands for conditioning on the starting state being $S_0=s \in \mathcal{K}$, \cite[eq. 4.13]{puterman2014markov}.
\end{lemma}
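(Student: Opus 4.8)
The plan is to exploit the one modeling assumption that makes this identity work: the session length $L$ is geometrically distributed and \emph{independent} of the recommendation policy, and therefore independent of the realized trajectory $\{S_t\}_{t\ge 1}$. Since the user exits with probability $1-\lambda$ after every request, the event ``the session is still alive at step $t$'' has probability $\mathbb{P}(L\ge t)=\lambda^{t-1}$ for every $t\ge 1$, and this probability does not depend on which contents were visited. I would first extend the (finite) session to a genuine infinite-horizon Markov chain $\{S_t\}_{t\ge 1}$ driven by the transition matrix $P$ of \eq{eq:transition}, so that $c(S_t)$ is defined for all $t$, and then truncate it by $L$.

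Concretely, I would write the finite random sum as an infinite sum with an indicator truncation,
\begin{align}
\sum_{t=1}^{L} c(S_t) \;=\; \sum_{t=1}^{\infty} c(S_t)\,\mathbf{1}_{\{t\le L\}}. \nonumber
\end{align}
Taking $\mathbb{E}_{s}[\cdot]$, moving the expectation inside the sum (justified below), and using the independence of $L$ from $\{S_t\}$ to factorize $\mathbb{E}_{s}\!\left[c(S_t)\,\mathbf{1}_{\{t\le L\}}\right]=\mathbb{E}_{s}[c(S_t)]\cdot\mathbb{P}(L\ge t)=\lambda^{t-1}\,\mathbb{E}_{s}[c(S_t)]$, one obtains
\begin{align}
v(s) \;=\; \sum_{t=1}^{\infty}\lambda^{t-1}\,\mathbb{E}_{s}\!\left[c(S_t)\right] \;=\; \mathbb{E}_{s}\!\left[\sum_{t=1}^{\infty}\lambda^{t-1} c(S_t)\right], \nonumber
\end{align}
which is exactly the claimed identity. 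An equivalent, slightly more structural way to phrase the same argument is to append an absorbing ``exit'' state that is entered with probability $1-\lambda$ at each step; the expected total (undiscounted) cost accumulated before absorption on this augmented chain equals the expected $\lambda$-discounted cost on the original chain, which is the classical reformulation recorded in \cite[eq. 4.13]{puterman2014markov}.

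The interchange of $\mathbb{E}_{s}$ and the infinite summation is the only technical point, and I expect it to be the (mild) main obstacle: it is handled by Tonelli's theorem, since all costs $c_i\ge 0$ make the partial sums monotone, so the swap is unconditional; finiteness of the common value then follows from $c_i\le c_{\max}$ and $\sum_{t\ge 1}\lambda^{t-1}=1/(1-\lambda)<\infty$, which also yields absolute convergence if one later wants signed costs. The genuinely load-bearing hypothesis is the independence of $L$ from the trajectory; if the per-step exit probability were allowed to depend on the viewed content or on the recommendations, the factorization $\mathbb{P}(L\ge t\mid S_1,\dots,S_t)=\lambda^{t-1}$ would fail and this clean discounted reformulation would no longer be exact.
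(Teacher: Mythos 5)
Your proof is correct and follows essentially the same route as the paper: the paper conditions on $L$ via the law of total expectation (its Eq.~\eqref{eq:total-expectation}, citing Puterman) and then swaps the order of summation, which is exactly the Fubini/Tonelli interchange you carry out with the indicator truncation and the tail identity $\mathbb{P}(L\ge t)=\lambda^{t-1}$. Your version is slightly more explicit about the two ingredients the paper leaves implicit -- the independence of $L$ from the trajectory and the justification of the interchange -- but it is the same argument.
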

Equality in (\ref{eq:average-cost1}) holds because the random session length $L$ is assumed to be distributed as a {\tt $Geom(\lambda)$}. The expectation of the total cost is found by applying the law of total expectation
\begin{align}
\mathbb{E}_{s} \bigg(\sum_{t = 1}^{L} c(S_{t})\bigg) = \sum_{l = 1}^{\infty}\mathbb{P}(L = l)\cdot \mathbb{E}_{s}\bigg(\sum_{t = 1}^{l} c(S_{t})\bigg). \label{eq:total-expectation}
\end{align}
We refer the reader to \cite[Prop. 5.3.1]{puterman2014markov} for more details. The parameter $\lambda$ is called the discount factor in the sense that the cost incurred in the immediate future is more important than the cost in the far future. The relative importance of future costs depends on the value of $\lambda \in [0,1]$. 
Starting from state $i \in \mathcal{K}$ we want to minimize
\begin{problem}[\textbf{Main OP}]
\begin{small}
\label{problem:BASIS}
\begin{align}
v^{\ast}(i) = \underset{R}{\min} & \bigg\{\mathbb{E} \bigg(\sum_{t=1}^{\infty} \lambda^{t-1} c(S_{t}, R)  \mid S_0=i\bigg) \bigg\}~\forall~i~\in~\mathcal{K}\label{eq:optim}\\
\textnormal{subject to} \quad
 & \sum_{j=1}^{K} r_{ij} = N,~\forall i \in \mathcal{K}\label{eq:budget-con}\\
 & 0 \le r_{ij} \le 1,~\forall i, j \in \mathcal{K} \label{eq:probability-con}\\
 & r_{ii} = 0,~\forall i \in \mathcal{K} \label{eq:selfrec-con}\\
 & \sum_{j=1}^{K} r_{ij} \cdot u_{ij} \ge q \cdot Q_{i}^{max},~\forall i \in \mathcal{K} \label{eq:quality-con}.
\end{align}
where $q\in[0,1]$ is the tuning quality parameter.
\end{small}
\end{problem}
$S_t$ is the random variable of the state at step $t$, and $i$ (or $s$) is its realisation taking values in $\mathcal{K}$.
The optimization variables are the $\left\{r_{ij}\right\}$ per-item recommendation frequencies. The feasible space is shaped by the set of constraints imposed on the RS policy, which has to obey four specifications:
\begin{itemize}[leftmargin=*,noitemsep,topsep=0pt]
    \item (\ref{eq:budget-con}): Recommend exactly $N$ items per content view. 
    \item (\ref{eq:probability-con}): $r_{ij}\in[0,1]$ is a time-sharing proportion.
    \item (\ref{eq:selfrec-con}): No self-recommendation is allowed.
    \item (\ref{eq:quality-con}): Maintain an \emph{average} user satisfaction per viewed content above a pre-defined $q$, i.e.  $\mathbb{E}\left[{Q_i(w)}\right]\geq q$ (see (\ref{eq:user-satisfaction})).
\end{itemize}
Constraint (\ref{eq:budget-con}) incorporates the number of recommendation slots $N$ in the constraints, following (\ref{eq:sumrij}). Using the per-item frequencies the solution complexity becomes insensitive to the value of $N$, something not possible with the initial batch formulation, where the number of batch combinations increases with $N$.

A hard constraint on the average user satisfaction from the recommendation batch is introduced in (\ref{eq:quality-con}). If active, the RS is restricted to maintain an average user satisfaction $\ge q$ for every item $i \in \mathcal{K}$, regardless of how the user reacts to good/bad recommendations. 
We denote the feasible set of policies for viewed content $i$ by $\mathcal{R}_i = \{r_{ij}: (\ref{eq:budget-con}), (\ref{eq:probability-con}), (\ref{eq:selfrec-con}), (\ref{eq:quality-con})\}$. The feasible set is denoted by $\mathcal{R}$ and is convex as the intersection of linear inequalities, equalities and box constraints. It is described by $K^2+3K$ linear constraints in total.

\subsection{Optimality}

The optimal solution to \textbf{\nameref{problem:BASIS}}, i.e., the optimal vector $\mathbf{v}^{\ast} = [{v}^{\ast}(1),\dots, {v}^{\ast}(K)]$ for any initial state $s$ is unique and satisfies the Bellman optimality equations (see [Puterman, Theorem 6.2.3] and apply Lemma \ref{lemma:1}).

\myitem{Bellman Optimality Equations.}
The optimal value vector must satisfy the following $K$ (Bellman) equations, one per state,
\begin{align}\label{eq:bellman-opt}
    v^{\ast}(i)  = c_{i} + \lambda \min_{\mathbf{r}_i \in \mathcal{R}_{i}} \bigg\{\sum_{j=1}^{K} P_{ij}(\mathbf{r}_{i}) \cdot v^{\ast}(j) \bigg\}~\forall i \in \mathcal{K}.
\end{align}
where $P_{ij}(\mathbf{r}_{i})$ is defined in (\ref{eq:transition}) and $c_{i}$ indicates the \emph{immediate cost} of visiting state $i$.
We can apply well established iterative algorithms to solve these equations~\cite{puterman2014markov,bertsekas1996neuro}; we choose here the well-known policy iteration (PI). A key contribution of our work is that unlike ``vanilla'' MDPs with \emph{discrete actions}, in each iteration we are required to solve a minimization problem in the $K$-sized variable $\mathbf{r}_i$ for each state $i$, see (\ref{eq:bellman-opt}) which radically reduces the interior minimization step of PI. Importantly, as $\mathbf{v}^{\ast}$ remains the same during the policy improvement step (the for loop over the $K$ states), the $K$ minimizers can be straightforwardly parallelized. 

\subsection{Versatility of look-ahead horizon via the choice of $\lambda$}

The MDP has the upside of being flexible on the range of problems it can tackle; part of this flexibility is the application of an arbitrary average session length that can be controlled by $\lambda$.
Existing works in the literature analyze \emph{infinitely long} sessions, like in~\cite{giannakas2018show,giannakas2019wiopt}, which is of course unrealistic. The MDP framework introduced in the current work, uses $\lambda$ as a tuning parameter that controls the average session length to be equal to $(1-\lambda)^{-1}$. Let us observe some special cases.

\myitem{Case: $\lambda \to 0$.} 
The objective function in (\ref{eq:optim}), becomes $v(s) = \mathbb{E}_{s} [0^0 c(S_1) + 0^1 c(S_2) + \dots ] = \mathbb{E}_{s} [c(S_1)]$ (with the convention $0^0 := 1$) i.e., the user starting state is $S_0=s$ and does \emph{exactly} one more request which generates loss $c(S_{1})$.
For $\lambda \to 0$, the only future cost is the immediate cost $c_j$ that is incurred by visiting state $S_1=j$ at $t=1$. 
Thus we can explicitly compute $v(s) = \mathbb{E}_{s}[c(S_1)] = \sum_{j=1}^K P_{s,j} c_j$ and find the optimal policy by solving $K$ (one for each starting state $s \in \mathcal{K}$) minimization problems 
\begin{equation}
\underset{\mathbf{r}_s\in\mathcal{R}_{s}}\min \bigg\{\sum_{j=1}^{K} P_{s,j}(\mathbf{r}_i) \cdot c_j \bigg\} \forall~s~\in~\mathcal{K}.
\end{equation}
Setting $\lambda=0$ in \nameref{problem:BASIS}, our MDP returns the $q$-Mixed policy.


\myitem{Case: $\lambda \to 1$.}
For the infinite horizon, the value $v(s)$ diverges for $\lambda = 1$ (no discount) as it allows infinitely many steps to add-up in the cost. However, we can instead find the \emph{time-average} long-term cost (see \cite[Cor.8.2.5]{puterman2014markov}), which is equal to $\lim_{\lambda \to 1} (1-\lambda) v_{\lambda}(s)$ 
. 
This is the limit of the ratio of sum cost over average session-length and it is finite for unichain MDPs (Lemma \ref{lemma:1}).

\myitem{Short and long length $\lambda$.} Since $v(s)$ indicates the expected \emph{cumulative cost-to-go}, for $\lambda \to 0$ the state from which the user starts her session matters, and the values of the vector $v(s)$ will differ. On the contrary for $\lambda\to 1$, as the $v(s)$ tend to infinity ($v(s)$ is cumulative and larger as $\lambda$ grows), the relative difference between the states becomes negligible. That is all the states have approximately the same value and the starting state $s$ does not matter.

In reality, the session length is somewhere in the middle. To determine the value of $\lambda$ in practice, we use the fact that the mean session-length is equal to $1/(1-\lambda)$. Then, the RS could measure empirical averages of the user session lengths to determine an estimate $\hat{\lambda}$ and substitute this value in the MDP to derive appropriate recommendations.

\section{A User Model}
\label{sec:models}
The transition probabilities $P_{ij}$ in (\ref{eq:transition}) allow to model various types of user response to the recommendation policy. The user behaviour is summarised by her click-through probabilities $\left\{a_{i,j}\right\}$. These can take specific expressions and be functions of $u_{ij}$ or even $r_{ij}$ to represent some \say{reactivity} of the user to the policy, and each choice represents a different type of behaviour. Note that as long as $P_{ij}$ is a convex or linear function of $r_{ij}$, our framework can solve it optimally and efficiently.
%
We study here a specific user, who remains equally curious about any recommended item, provided that the long-term quality of recommendations remains reasonably good. Then, the response of this \textbf{``curious'' user} is:
\begin{equation}
\alpha_{ij} = \alpha/N \ \ \ \ \text{for}\ \ \ \ \mathbb{E}[Q_i] \ge q~\forall i.
\end{equation}
The above reads that the user wants to be satisfied by at least $q$ from the RS recommendations (see (\ref{eq:quality-con})). Her transition probabilities (i.e., (\ref{eq:transition})) now become
\begin{eqnarray}
    \label{eq:transition-model1}
    P\{i \to j\} & = & \frac{\alpha}{N} \cdot r_{ij} + (1 - \alpha) \cdot p_0(j),
\end{eqnarray}
where notice that $\sum_{l=1}^K \frac{\alpha}{N} \cdot r_{ij} = \alpha$. If the user is satisfied, her \emph{expected} click-through rate $\alpha$ remains fixed throughout the session, and she may click any of the $N$ recommended items uniformly at random. 
%
Another, perhaps more pragmatic interpretation, is that the RS can just measure some data regarding the user's clickthrough rate and how this relates to the average user satisfaction, and uses these estimates to calibrates the MDP. The tuple $(\alpha, q)$ comprises a wide range of user attitudes ranging from highly curious (high $\alpha$, low $q$) to rather conservative (medium/high $\alpha$, very high $q$).

\myitem{MDP Solution.}
Using (\ref{eq:transition-model1}), the Bellman equations take the form, $\forall i \in \mathcal{K}$,
\begin{align}\label{eq:bellman-opt-model1}
v^{\ast}(i)  = c_{i} + \bar{v} + \lambda \frac{\alpha}{N} \min_{\mathbf{r}_i \in \mathcal{R}_{i}} \bigg\{\sum_{j=1}^{K} r_{ij}\cdot v^{\ast}(j) \bigg\},
\end{align}
where $\bar{v} := \lambda(1-\alpha)\sum_{j=1}^{K} p_{0}(j) v^{\ast}(j)$ is independent of $i$.
Therefore, in each greedy improvement step, the optimizer will have to solve the following optimization problem.
\begin{problem}[\textbf{Inner OP}]
\begin{small}
\label{problem:flat}
\begin{align}
\underset{\mathbf{r}_{i}\in \mathcal{R}_{i}}{\min} & \bigg\{\sum_{j=1}^{K} r_{ij}\cdot v^{\ast}(j) \bigg\} \nonumber \\
\textnormal{subject to} \quad
 & \mathcal{R}_i = \{r_{ij}:(\ref{eq:budget-con}), (\ref{eq:probability-con}), (\ref{eq:selfrec-con}), (\ref{eq:quality-con})\} \nonumber
\end{align}
where $q\in[0,1]$ is the tuning quality parameter of (\ref{eq:quality-con}).
\end{small}
\end{problem}
\begin{lemma}
The greedy improvement step of Policy Iteration for the curious user, reduces to solving the \nameref{problem:flat} which is a Linear Program (LP) of size $K$; the objective and all the constraints are linear on the variables $r_{ij}$.
\end{lemma}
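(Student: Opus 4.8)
The plan is to verify directly that both the objective and every constraint in \nameref{problem:flat} are affine (hence linear up to a constant) in the decision vector $\mathbf{r}_i = (r_{i1},\dots,r_{iK})$, and then to invoke the fact established earlier (via Lemma \ref{lemma:1} and [Puterman, Thm.~6.2.3]) that the greedy improvement step of Policy Iteration amounts to solving, for each state $i$ independently, the inner minimization appearing in the Bellman operator (\ref{eq:bellman-opt-model1}). First I would recall that, because the curious user's transition law (\ref{eq:transition-model1}) makes $P_{ij}$ affine in $\mathbf{r}_i$, the Bellman right-hand side (\ref{eq:bellman-opt-model1}) isolates the only $\mathbf{r}_i$-dependent term as $\sum_{j=1}^{K} r_{ij} v^{\ast}(j)$, the constant $c_i + \bar v$ playing no role in the argmin; hence the improvement step is exactly \nameref{problem:flat}.

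Next I would go through the four constraint families defining $\mathcal{R}_i$ one at a time. Constraint (\ref{eq:budget-con}), $\sum_j r_{ij} = N$, is a single linear equality; (\ref{eq:probability-con}), $0 \le r_{ij} \le 1$, is a family of box (linear inequality) constraints; (\ref{eq:selfrec-con}), $r_{ii}=0$, is a linear equality (equivalently, it just removes one variable); and (\ref{eq:quality-con}), $\sum_j r_{ij} u_{ij} \ge q\,Q_i^{\max}$, is a single linear inequality since $u_{ij}$ and $Q_i^{\max}$ are fixed inputs independent of the optimization variables. The objective $\sum_j r_{ij} v^{\ast}(j)$ is linear in $\mathbf{r}_i$ because $\mathbf{v}^{\ast}$ is held fixed during the policy-improvement sweep (it is only updated in the separate policy-evaluation step). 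Counting, there are $K$ variables and $O(K)$ constraints, so the size is $K$ as claimed. I would also note feasibility/boundedness: $\mathcal{R}_i$ is nonempty whenever $q$ is chosen so that the Top-$N$ batch satisfies (\ref{eq:quality-con}) — indeed $R_Q$ attains $Q_i = 1 \ge q$ — and it is a bounded polytope by the box constraints, so the LP attains its minimum.

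There is no real obstacle here; the statement is essentially a bookkeeping observation. The one subtlety worth a sentence is why the improvement step decomposes over $i$ at all and why $\mathbf{v}^{\ast}$ (or, in a given PI iteration, the current value estimate $\mathbf{v}$) may be treated as a constant: this follows because, as already remarked in the text, $\mathbf{v}$ is frozen throughout the policy-improvement loop over the $K$ states, and the Bellman operator acts state-wise on the rows $\mathbf{r}_i$, with $\mathcal{R}_i$ imposing no coupling across different $i$. Given that, linearity of the per-state subproblem is immediate from the explicit forms above, completing the proof.
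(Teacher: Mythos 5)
Your proof is correct and takes essentially the same route as the paper: the paper's (implicit) argument is exactly the substitution of the curious-user transition law (\ref{eq:transition-model1}) into the Bellman optimality equations to obtain (\ref{eq:bellman-opt-model1}), after which the only $\mathbf{r}_i$-dependent term is the linear objective $\sum_{j} r_{ij} v^{\ast}(j)$ minimized over the polytope $\mathcal{R}_i$ defined by the linear constraints (\ref{eq:budget-con})--(\ref{eq:quality-con}), with $\mathbf{v}^{\ast}$ frozen during the improvement sweep. Your added remarks on feasibility (Top-$N$ always satisfies (\ref{eq:quality-con}) since $q\in[0,1]$) and on the state-wise decoupling are consistent with the paper and only make the bookkeeping explicit.
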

The $K$ LPs in the inner loop of PI can be solved using standard software (e.g. CPLEX). Note here, that solving the MDP returns a randomised policy in general, due to the constraint (\ref{eq:quality-con}). Moreover, the Bellman equations reveal structural properties of the policy, showing optimality for myopic heuristics as special cases.
\begin{property}\label{prop:q=1}
For $q = 1$, the optimal policy is the Top-N.
\end{property}
\begin{proof}
For $q=1$, the rhs of (\ref{eq:quality-con}) becomes $1\cdot \sum_{l \in \mathcal{U}_{i}(N)} u_{il} = Q_i^{\max}$. Assume that the optimal policy for content $i$ is to assign $r_{ij} = 1$ to contents that correspond to $\mathcal{U}_{i}(N-1)$, and $r_{ij} = x > 0$ to some content with $u_{ij} \notin \mathcal{U}_i(N)$ and $r_{im} = 1-x$ to the least related item with $u_{im} \in \mathcal{U}_{i}(N)$.
Then the constraint (\ref{eq:quality-con}) reads
\begin{align}
    \underset{l \in \mathcal{U}_{i}(N-1)}{\sum u_{il}} + (1-x) u_{im} + x u_{ij} & \ge \underset{l \in \mathcal{U}_{i}(N-1)}{\sum u_{il}} + u_{im} \nonumber \\
    (u_{ij} - u_{im}) \cdot x & \ge 0
\end{align}
By definition, $u_{im} > u_{ij}$ and thus the inequality cannot hold if we assign a positive budget to any item $j$ with $u_{ij} \notin \mathcal{U}_{i}(N)$.
\end{proof}
\begin{property}\label{prop:q=0}
For $q=0$ the optimal policy is the Low Cost.
\end{property}
\begin{proof}
Assume that we can order the optimal values $v^{\ast}(i)$ in increasing order $v^{\ast}(1) < \dots < v^{\ast}(K)$.
To find $v^{\ast}(i)$ we need to solve $\min_{\mathbf{r}_i \in \mathcal{R}_{i}} \bigg\{\sum_{j=1}^{K} r_{ij}\cdot v^{\ast}(j) \bigg\}$. 
For the case $q=0$, we can analytically compute $v^{\ast}(i)$, because the optimal decision is to assign $r_{ij} = 1$ to the lowest $v^{\ast}(j)$ (excluding $v^{\ast}(i)$). We can identify two cases for the expression of $v^{\ast}(i)$. Case (a): If $1 \le i \le N$ then 
\begin{align}\label{v1}
v^{\ast}(i) = c_i + \bar{v} + \lambda \left(\sum_{j = 1: j \neq i}^N v^{\ast}(j) + v^{\ast}(N+1) \right)
\end{align}
where in the above expression we need to make sure we exclude the self recommendation from the evaluation. Else for Case (b): $i > N$, the expression becomes
\begin{align}\label{v2}
v^{\ast}(i) = c_{i} + \bar{v} + \lambda \sum_{j = 1}^N v^{\ast}(j).
\end{align}
We need to compare the values of the states in pairs. There are three possibilities for the pairs. Pair-case (I): $1 \le i,j \le N$ and $i < j$, we get the difference (using (\ref{v1}))
\begin{align}
v^{\ast}(i) - v^{\ast}(j) < 0 \Rightarrow (c_i - c_j) + \lambda (v^{\ast}(j) - v^{\ast}(i)) < 0,\nonumber
\end{align}
where for the second term above $N-1$ terms have cancelled out.
Notice that due to the ordering, $\lambda (v^{\ast}(j) - v^{\ast}(i)) > 0$, so it must hold that $c_i - c_j < 0$ for the above expression to have a negative sign. Pair-case (II): $1 \le i \le N$ and $j > N$, we use  (\ref{v1}) and (\ref{v2}) and we result in the exact same expression for their difference as above. Finally, for Pair-case (III): $N < i < j$, we use (\ref{v2})
\begin{align}
v^{\ast}(i) - v^{\ast}(j) &< 0 \Rightarrow c_i - c_j < 0.
\end{align}
Therefore, the optimal costs-to-go $v^{\ast}(i)$ are ordered exactly as the immediate costs $c_i$, which concludes that for content $i$, recommending the $N$ lowest costs excluding $i$ is optimal.
\end{proof}

\section{Validation}
\label{sec:simulations}
We evaluate the performance of the proposed MDP recommendation policy in terms of cost and user satisfaction (against other myopic ones) and of computational efficiency.



\subsection{Simulation Setup}
\myitem{Caching Policy and Baseline Cost.}
In our simulations, we assume that for each dataset, the number of cached items is $M = 0.01 \cdot K$, where $K$ is the size of the corresponding catalog. This number is similar to other works~\cite{femto} or~\cite{chatzieleftheriou2019TMC}. 
We cache the first $M$ IDs of the catalog, i.e., $\mathcal{C} = \{1,\dots, M\}$. 
We consider a uniform personal preference distribution $\mathbf{p}_0$ i.e., $\mathbf{p}_{0}\sim Uniform(1, \dots, K)$. Thus, the caching policy is essentially random and the performance of all policies will be unaffected by $\mathbf{p}_{0}$.
We proceed like that as our goal is to understand the \emph{true gains} that come from the RS's \emph{network friendliness}, and \emph{not} the ones hailing from the potential skewness of $\mathbf{p}_0$. 
Furthermore, the cost of the \emph{non}-cached items is set to an arbitrary price, say 10.0 units, and of the cached ones to an arbitrary smaller price, say 0.0. If we assume that \emph{there is no RS in place}, or equivalently the user never follows recommendations and all requests are generated according to $\mathbf{p}_0$ (i.e., the standard Independent Reference Model), the hit probability $p_{hit} = 0.01$, which easily translates to $\overline{C} = 10.0 \times (1-p_{hit}) + 0.0 \times p_{hit} = 9.9$ units of cost per request. The numbers above hold for all the plots of the section since uniform $\mathbf{p}_0$ and $M/K=0.01$ hold everywhere.



\myitem{Simulation and Metrics.}
The RS \emph{knows} exactly the user model and the evaluated policy is the optimal one computed by the MDP.
The first metric is the average cost, denoted as $\overline{C}$ (which is network-oriented) and the second one is the average user satisfaction denoted as $\overline{Q}$; both are measured \emph{per content request}.
We perform a Monte Carlo simulation where we generate 1000 sessions of random size $L$, where $L \sim Geom(\lambda)$ ($\lambda$ is a parameter of the simulation). Therefore, we measure $C_{L}$ (see (\ref{eq:objective})) and $Q_{L}$ (see (\ref{eq:user-satisfaction}))
For some fixed $\lambda$, the quantities $\overline{C}$ and $\overline{Q}$ are produced by further averaging $C_{L}$ and $Q_{L}$ over 1000 runs. 
In some experiments we will also report the cache hit relative gain of MDP against other policies which will be defined as $\frac{p_{hit}^{MDP}-p_{hit}^{ref}}{p_{hit}^{ref}}\cdot 100\%$.

\myitem{Execution of the PI Algorithm.}
All experiments were carried out using a PC with RAM: 8 GB 1600 MHz DDR3 and Processor: 1,6 GHz Dual-Core Intel Core i5. The minimizers of \nameref{problem:flat} that arise were solved through CPLEX.

\subsection{Traces}
We use three datasets to construct three content relation graphs $U$ (Section \ref{subsec:inputs}), two real ones and one synthetic.

\myitem{MovieLens.} We consider the Movielens movie rating dataset~\cite{movielens}, containing $69162$ ratings (0 to 5 stars) of $671$ users for $9066$ movies. We apply an item-to-item collaborative filtering 
to extract the missing user ratings, and then use the cosine similarity with range $[-1,1]$ of each pair of contents. We floor all values $< 0.5$ to zero and leave the remaining ones intact.
Finally, we remove from the library contents with less than 25 related items to end up with a relatively dense $U$.

\myitem{YouTube.}
We consider the YouTube dataset found in~\cite{youtube-related-videos-dataset}. 
From this, we choose the largest component of the library and build a graph of 2098 nodes (contents) if there is a link from $i \to j$. As the values of the dataset were $u_{ij} \in \{0,1\}$, whenever an edge was found, we assigned it a random weight $u_{ij} \sim Uniform(0.5,1)$.


\myitem{Synthetic.}
We consider a synthetic content graph $U$; this way we can see how the algorithm behaves in a more uniform $U$. We decide the size of the library $K$, and for every item in the library we draw a number out of $Uniform(1,100)$ which serves as the number of neighbors of $i$. We then assign on the edges a random weight $u_{ij} \sim Uniform(0.5,1)$.

For these datasets, we present the statistics related to $U$, and its relation to the cached contents. To this end, based on $U$, we consider there is an edge from $i \to j$ if $u_{ij}>0$ and we are interested on the out-degree of the nodes. The graph in general is directed.
\begin{itemize}
    \item $\text{deg}^{-}_{i}$: out-degree of node $i$.
    \item $\text{deg}^{-}_{i}(\mathcal{C})$: out-degree of node $i$ directed \emph{only} to nodes in the set $\mathcal{C}$ (the set of cached items).
\end{itemize}



\begin{figure}
\centering
\includegraphics[width=0.44\columnwidth]{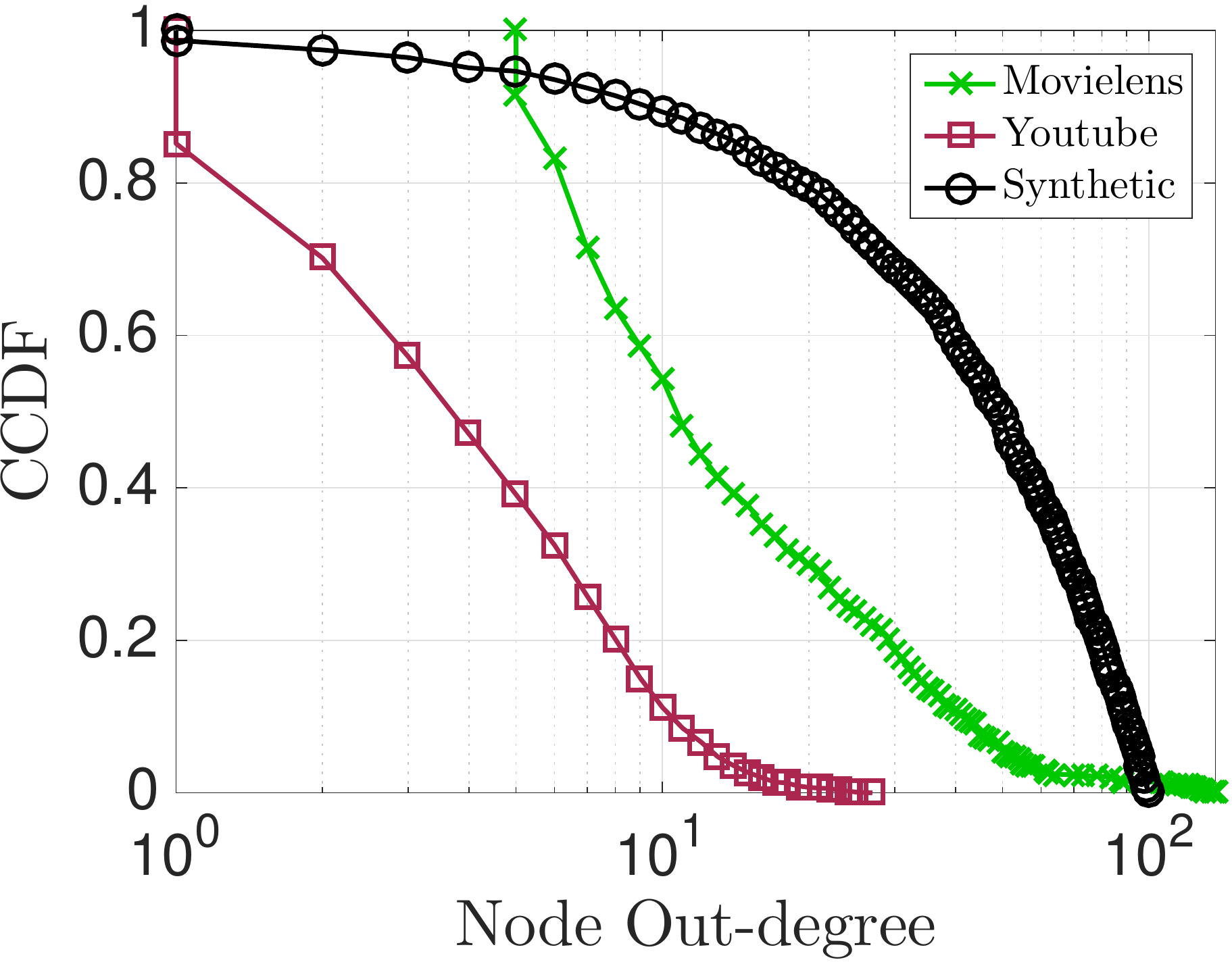}
\caption{Traces Out-Degree CCDF}
\label{fig:ccdf}
\end{figure}

In Fig.\ref{fig:ccdf}, on the $x$-axis we see the $\text{deg}^{-}_{i}$ in logarithmic scale, and on the $y$-axis its ccdf. We can conclude for the two real traces, that the $\text{deg}^{-}_{i}$ tends to be quite high only for a small fraction of the nodes. 



\begin{table}
\centering
\caption{Statistics - Datasets} \label{table:stats}
\begin{tabular}{c|c|c|c}
							&{MovieLens} 		&{YouTube} 		&{Synthetic} \\
\hline \hline
{Nodes}			    			&{1060} 			&{2098} 			&{2000}\\
\hline
{Total Edges}		        			&{20162} 			&{11288} 			&{99367}\\
\hline
{mean $\text{deg}^{-}$}			&{19.02} 			&{5.38} 			&{49.68}\\
\hline
{mean $\text{deg}^{-}(\mathcal{C})$} &{0.17} 			&{0.06} 			&{0.51}\\
\hline
\end{tabular}
\end{table}

\begin{figure}
\centering
\subfigure[MovieLens - Cost]{\includegraphics[width=0.4\columnwidth]{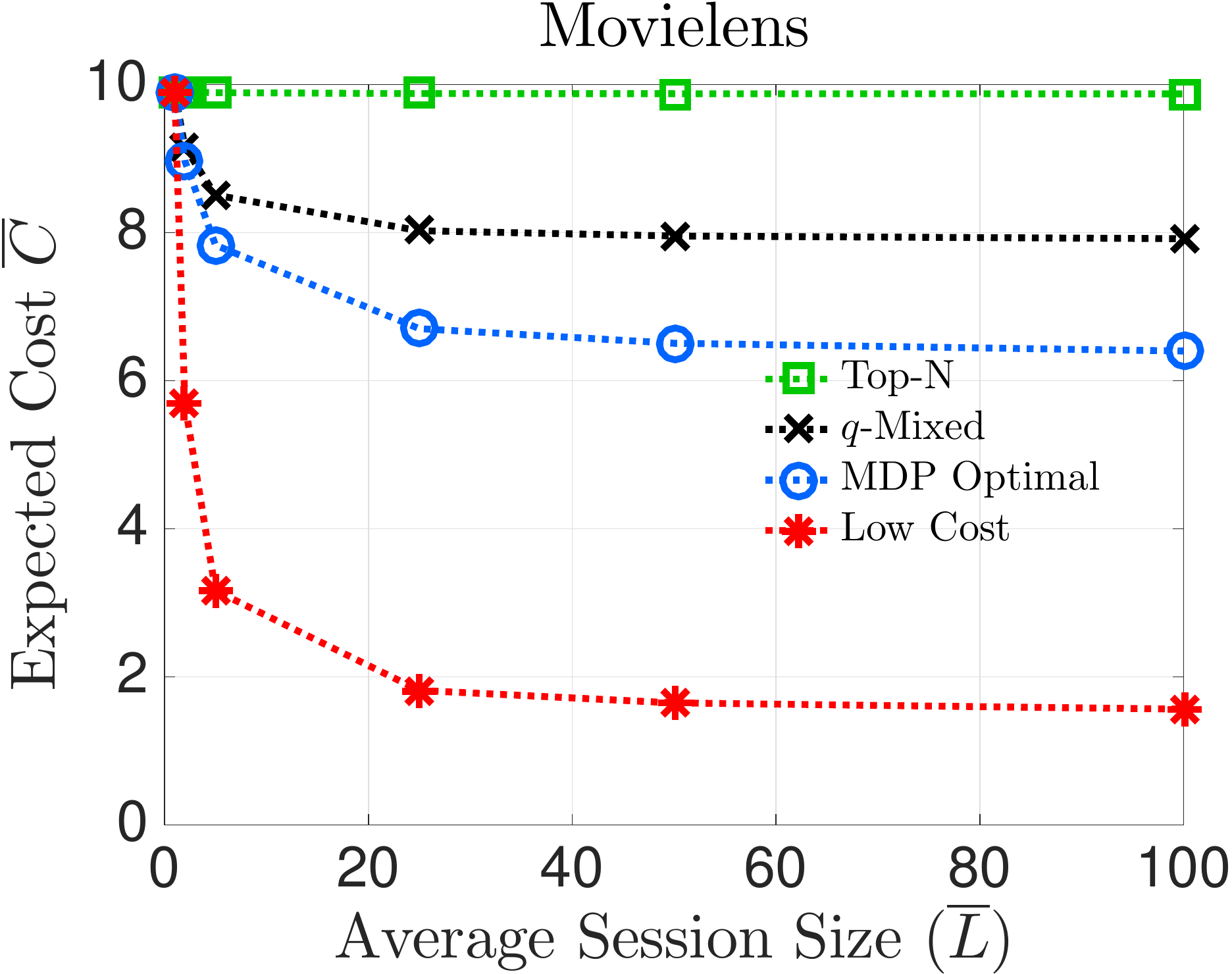}\label{fig:cost-movielens}}
\hspace{0.04\columnwidth}
\subfigure[YouTube - Cost]{\includegraphics[width=0.4\columnwidth]{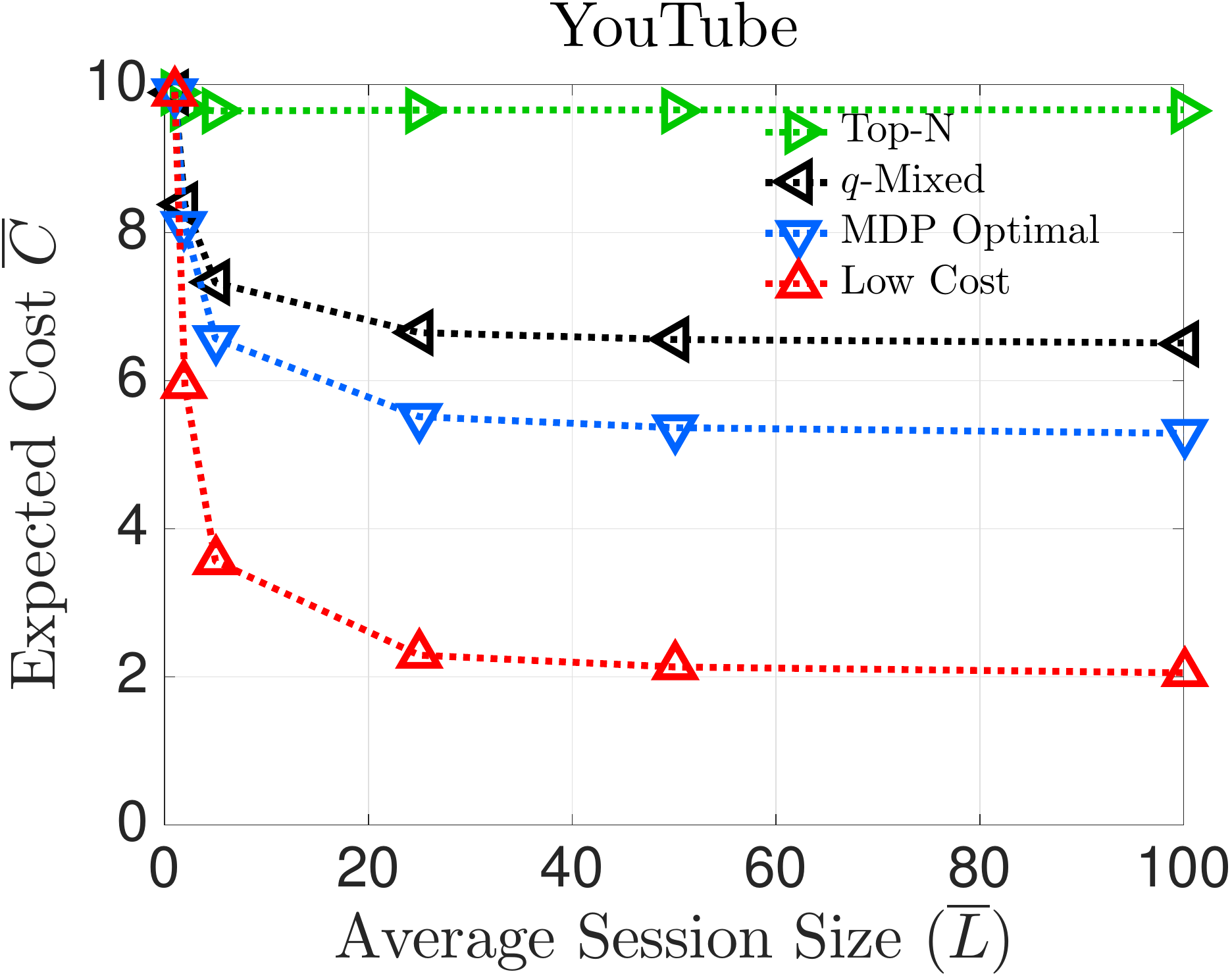}\label{fig:cost-youtube}}
\hspace{0.04\columnwidth}
\subfigure[Movielens - User Satisfaction]{\includegraphics[width=0.4\columnwidth]{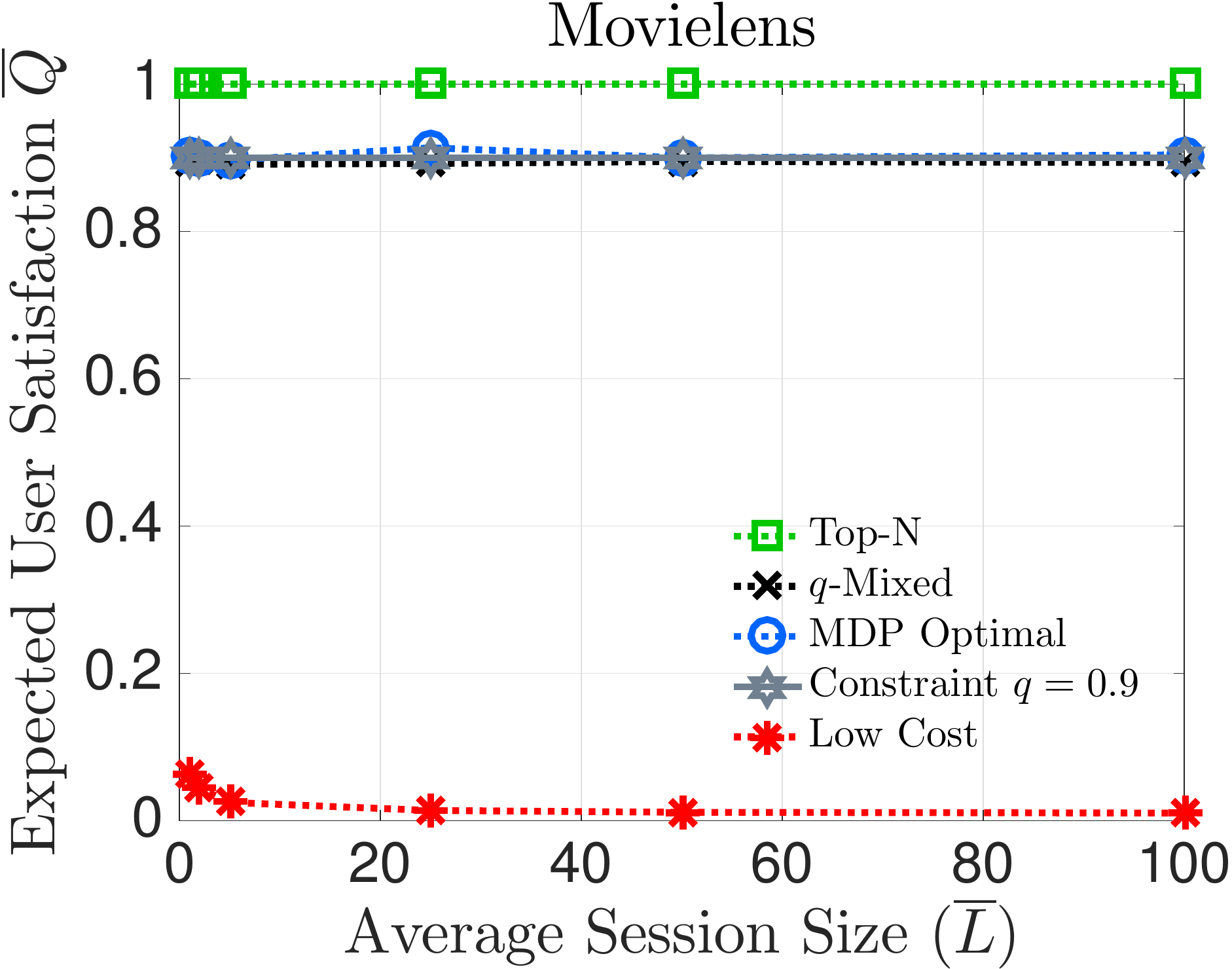}\label{fig:satisfaction-movielens}}
\hspace{0.04\columnwidth}
\subfigure[YouTube - User Satisfaction]{\includegraphics[width=0.4\columnwidth]{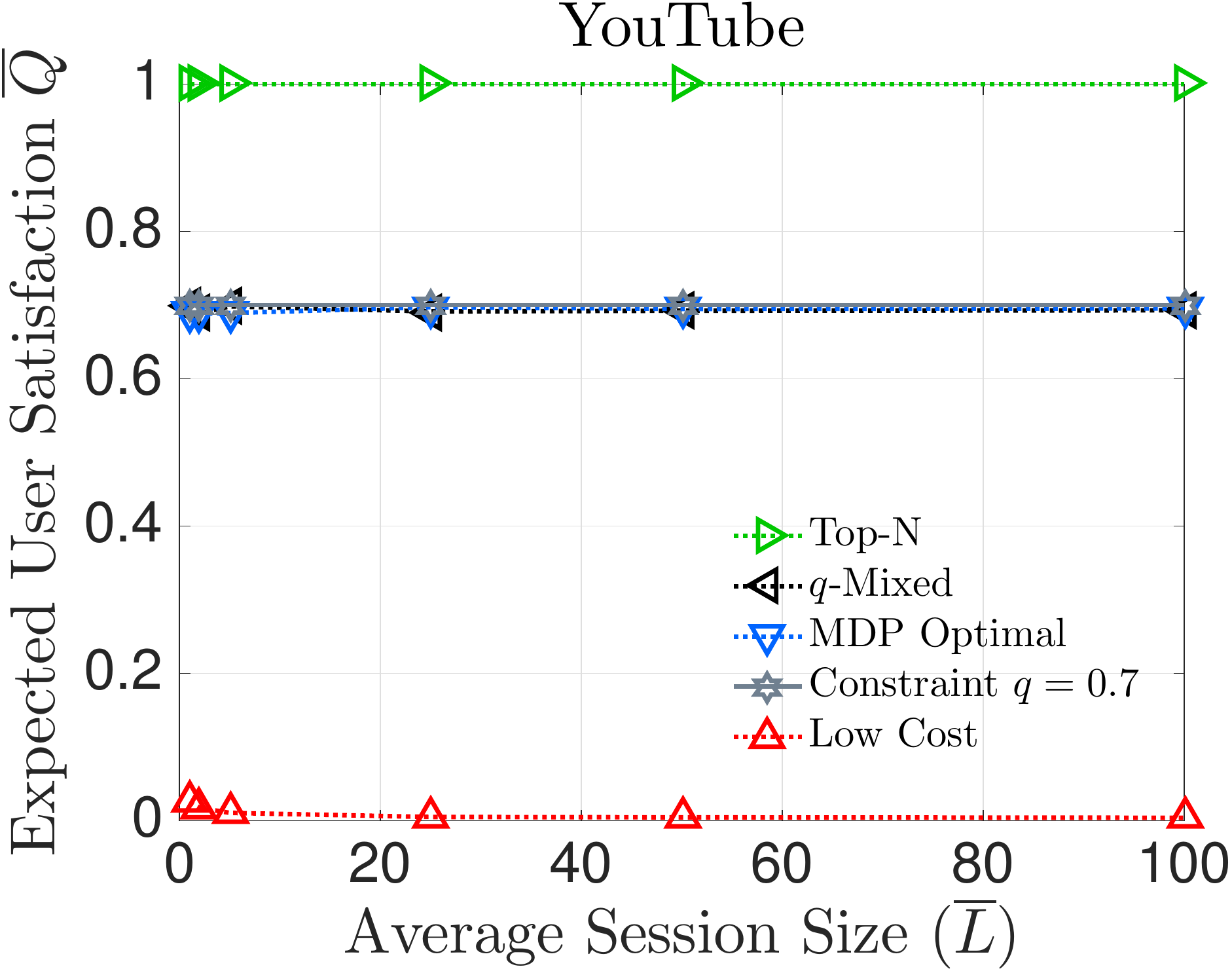}\label{fig:satisfaction-youtube}}
\hspace{0.04\columnwidth}
\subfigure[Relative Gain]{\includegraphics[width=0.4\columnwidth]{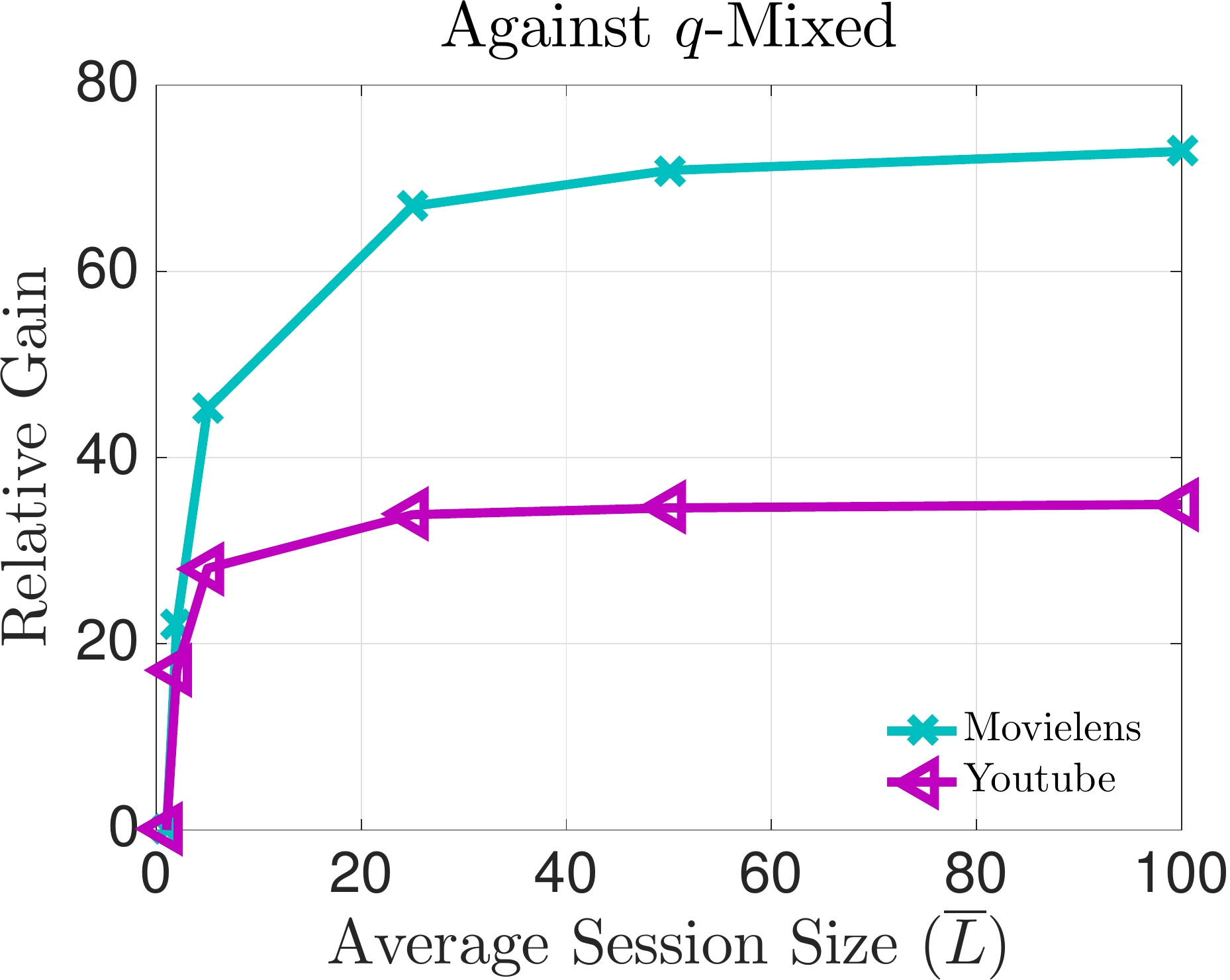}\label{fig:relative-all}}
\hspace{0.04\columnwidth}
\subfigure[Cost vs num. of recommendations $N$ and mean session size $\overline{L}$ ($\alpha = 0.75$, $q = 0.75$) - Synthetic]{\includegraphics[width=0.4\columnwidth]{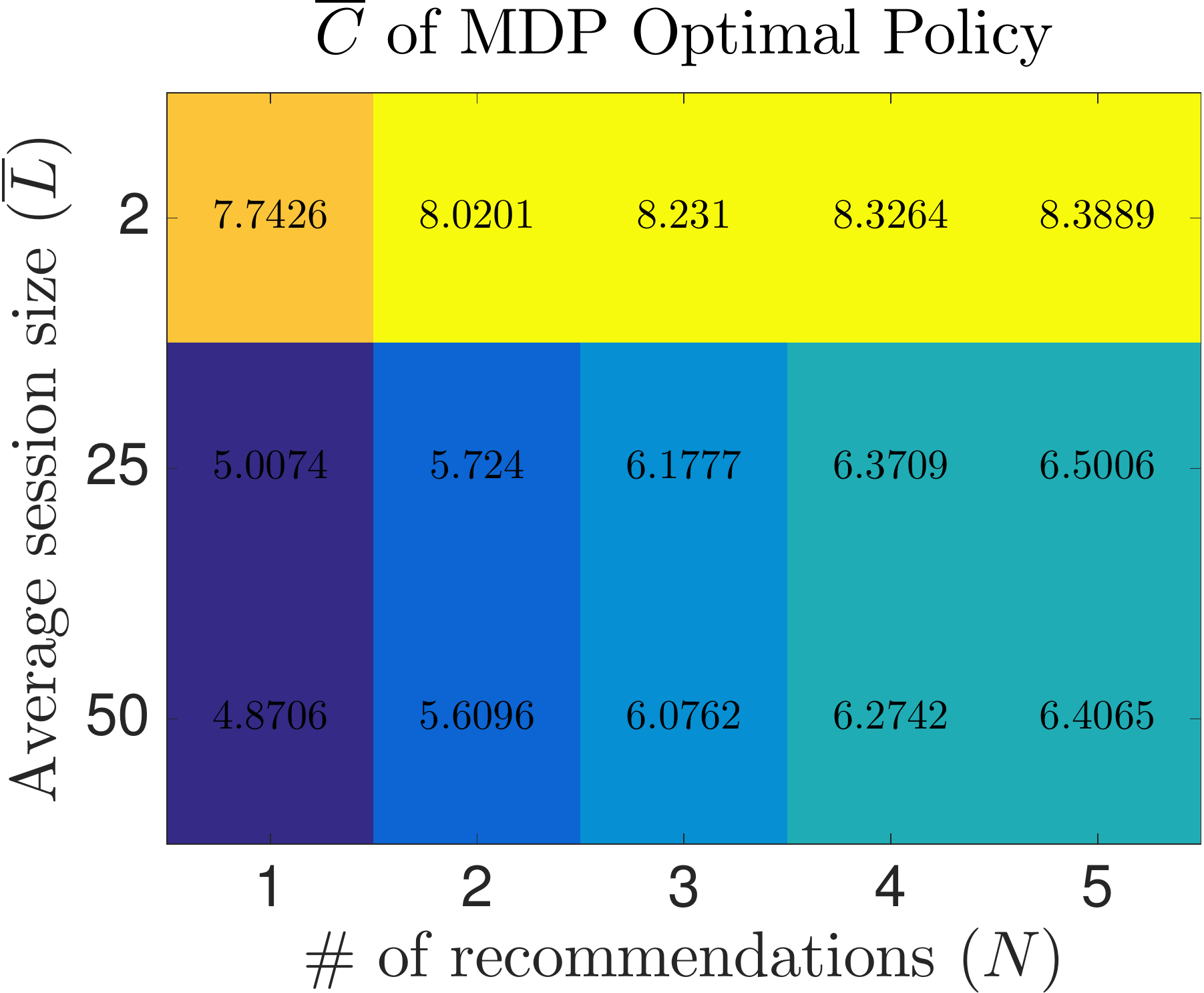}\label{fig:overN-GM}}
\caption{Subfigs. (a) - (e): Metrics vs mean session size $\overline{L}$, mvlns: $\{\alpha = 0.85, q = 0.90, N = 2\}$, yt: $\{\alpha = 0.8, q = 0.7$, N = 2$\}$}
\label{fig:general-model-all}
\end{figure}

\subsection{Results: Sensitivity Analysis}

\myitem{Effect of mean session size ($\overline{L}$).} We first compare the performance benefits of MDP which has look-ahead capabilities against myopic ones, when the size of the user session increases. To this end, in Fig. \ref{fig:general-model-all}, we vary the parameter $\lambda$ (Section \ref{sec:problem-setup}) to simulate random sessions with mean size $\overline{L} = \{1, 2, 5, 25, 50, 100\}$; we remind the reader that $\overline{L} = (1-\lambda)^{-1}$. We compare the performance of MDP against the three myopic ones discussed in Section \ref{subsec:action-space}.
For the Movielens we set $q=90\%$, and for the Youtube $q=70\%$ in order to ensure high user satisfaction $\overline{Q}$ for the cost-oriented policies.
This hard constraint of $\overline{Q}$ is depicted in Figs. \ref{fig:satisfaction-movielens}, \ref{fig:satisfaction-youtube} with a dashed grey line. For the $q$-Mixed policy we set it accordingly, hence it becomes a 0.9-Mixed and a 0.7-Mixed policy. 
The extreme policy Top-$N$ achieves $\overline{Q} = 1$, which is the upper bound for any policy, and the worst $\overline{C}$. In total contrast, the Low Cost returns the best possible cost but is \emph{infeasible}. The policy $q$-Mixed offers user satisfaction at or above the feasibility boundary.

\myitem{Obs. \#1:}
The MDP-optimal policy keeps the user satisfaction feasible while achieving the minimum cost $\overline{C}$, in Figs. \ref{fig:cost-movielens}, \ref{fig:cost-youtube} from the feasible myopic policies.
Moreover, in Fig. \ref{fig:relative-all}, we show the relative gain of the MDP with respect to $q$-Mixed as reference policy. Note that the respective gains against Top-$N$, which is omitted from the plot as it has no bias towards $\mathcal{C}$, are more than $1000\%$. Reasonably, the longer the horizon, the larger the gains of MDP which is equipped with look-ahead capabilities.

\myitem{A Note on Caching.} Our caching is essentially random. We could instead cache the items that have the most neighbors (in terms of $U$) or cache the top-$M$ items from the stationary distribution as created by the recommendation policy Top-$N$. Thus, we can loosely state that the cost performances of the MDP we see here, serve as a lower bound.

%

\myitem{Effect of $q$ and $\alpha$.}
For each dataset, in Fig. \ref{fig:ds-mixed-opt-q}, we pick some $\alpha$ and tighten the quality constraint by increasing $q$. In the same fashion, in Fig. \ref{fig:ds-mixed-opt-alpha}, we pick some $q$ for every dataset and increase the value of $\alpha$.
We present here only the average cost per request, since the RS quality achieved is equal to the $q$ value selected. Furthermore, we omit the two extreme policies, Top-$N$ and Low Cost, and only compare to $q$-Mixed, who also seeks a (suboptimal) tradeoff between cost and user satisfaction.
The first thing to notice is that, for $q = 1$ and $q = 0$, the two policies coincide, which is an immediate result of Properties~\ref{prop:q=1} and \ref{prop:q=0} as the optimal policies are Top-$N$ and Low-Cost respectively.
For all intermediate $q$ values, it is evident that the MDP-optimal policy improves performance compared to the $q$-Mixed.
We can also observe that the MDP-optimal policy is able to better exploit the increase of $\alpha$ as the gap between the policies becomes wider. This should not come as a surprise since the myopic policies \emph{do not} take into consideration the dynamics of user transitions. 
Note that an average session of $\overline{L}=25$, could loosely correspond to a 45min session of watching YouTube short clips~\cite{businessYoutubeSessions}.
\begin{figure} 
\centering
\subfigure[vs increasing $q$]{\includegraphics[width=0.4\columnwidth]{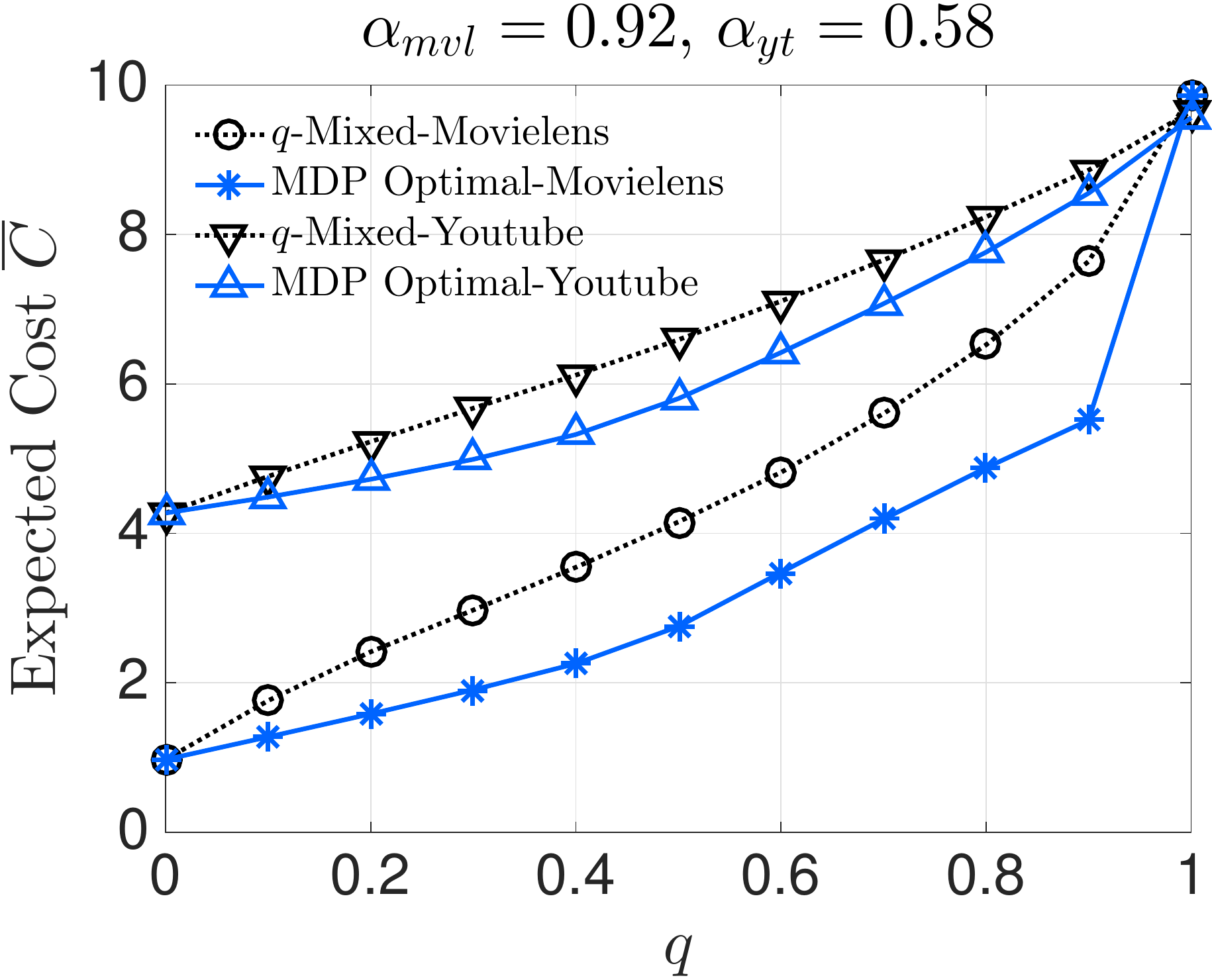}\label{fig:ds-mixed-opt-q}}
\hspace{0.08\columnwidth}
\subfigure[vs increasing $\alpha$]{\includegraphics[width=0.4\columnwidth]{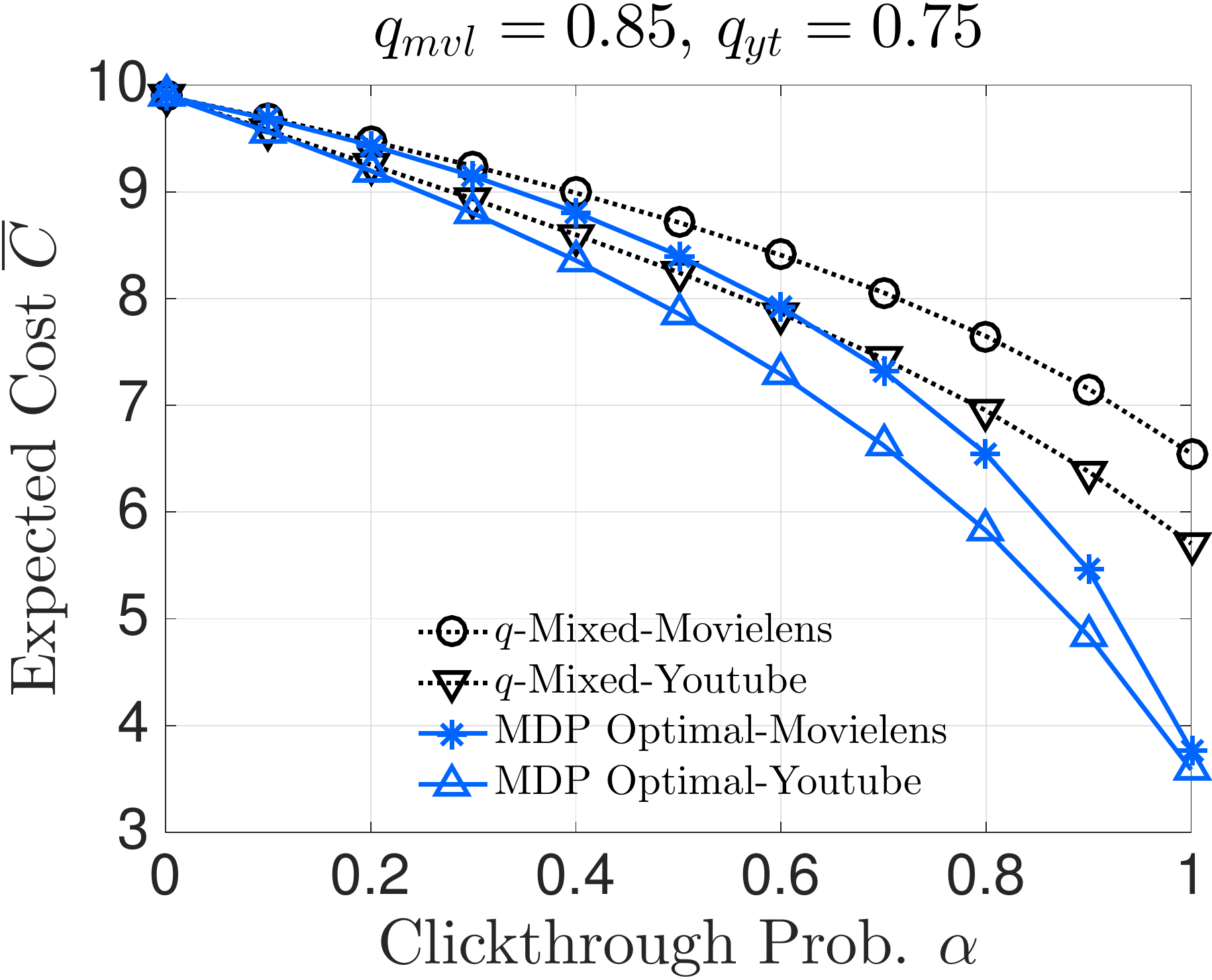}\label{fig:ds-mixed-opt-alpha}}
\caption{MDP-Optimal and $q$-Mixed evaluated on $\overline{L}=25$ requests ($N = 2$)}
\label{fig:ds-mixed-opt-model}
\end{figure}

\myitem{Effect of recommendation batch size ($N$).}
In Fig. \ref{fig:overN-GM}, we focus on the effect of $N$ on the expected cost. We provide a heatmap with increasing $\overline{L}$ on the $y$-axis and increasing $N$ on the $x$-axis. We observe that irrespective of $\overline{L}$, the cost is becoming worse with the increase of $N$ even for the Synthetic graph which has a much larger $\text{deg}^{-}(\mathcal{C}) = 0.51$.

\myitem{Obs. \#3:} Network-friendly recommendations \emph{is not} an easy task, but \emph{many} network-friendly recommendations is even harder. To better grasp this, consider a myopic RS. To satisfy both parties (network and user), when the user is at content $i$, the RS must have many cached \emph{and} related contents to recommend, which by definition are less than cached \emph{or} related. 
\begin{figure}
\centering
\subfigure[Proposed vs Brute Force, $\overline{L} = 50$]{\includegraphics[width=0.42\columnwidth]{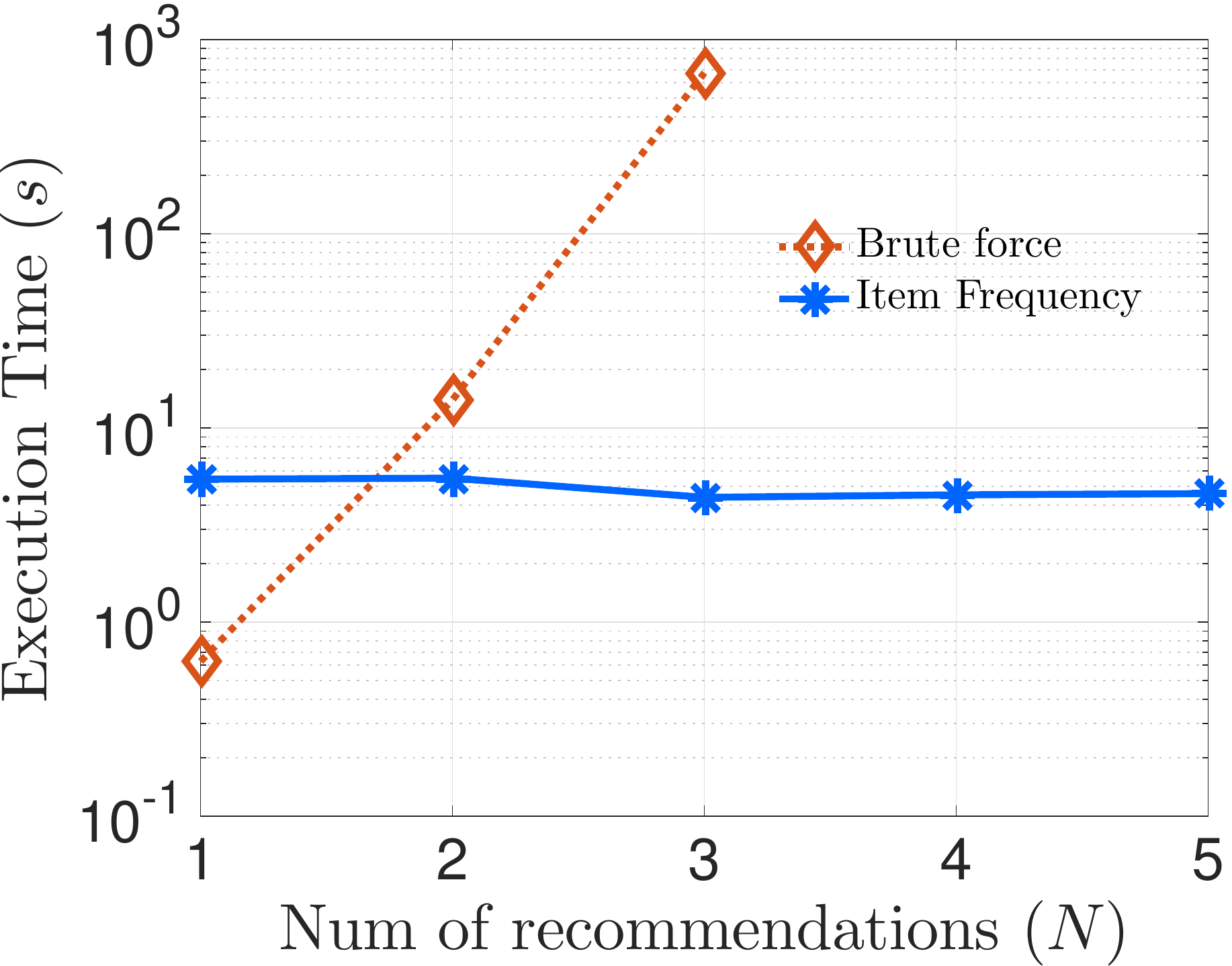}\label{fig:vanilla}}
\hspace{0.04\columnwidth}
\subfigure[Execution Times vs Library Size $K$, $\overline{L} = 20$]{\includegraphics[width=0.45\columnwidth]{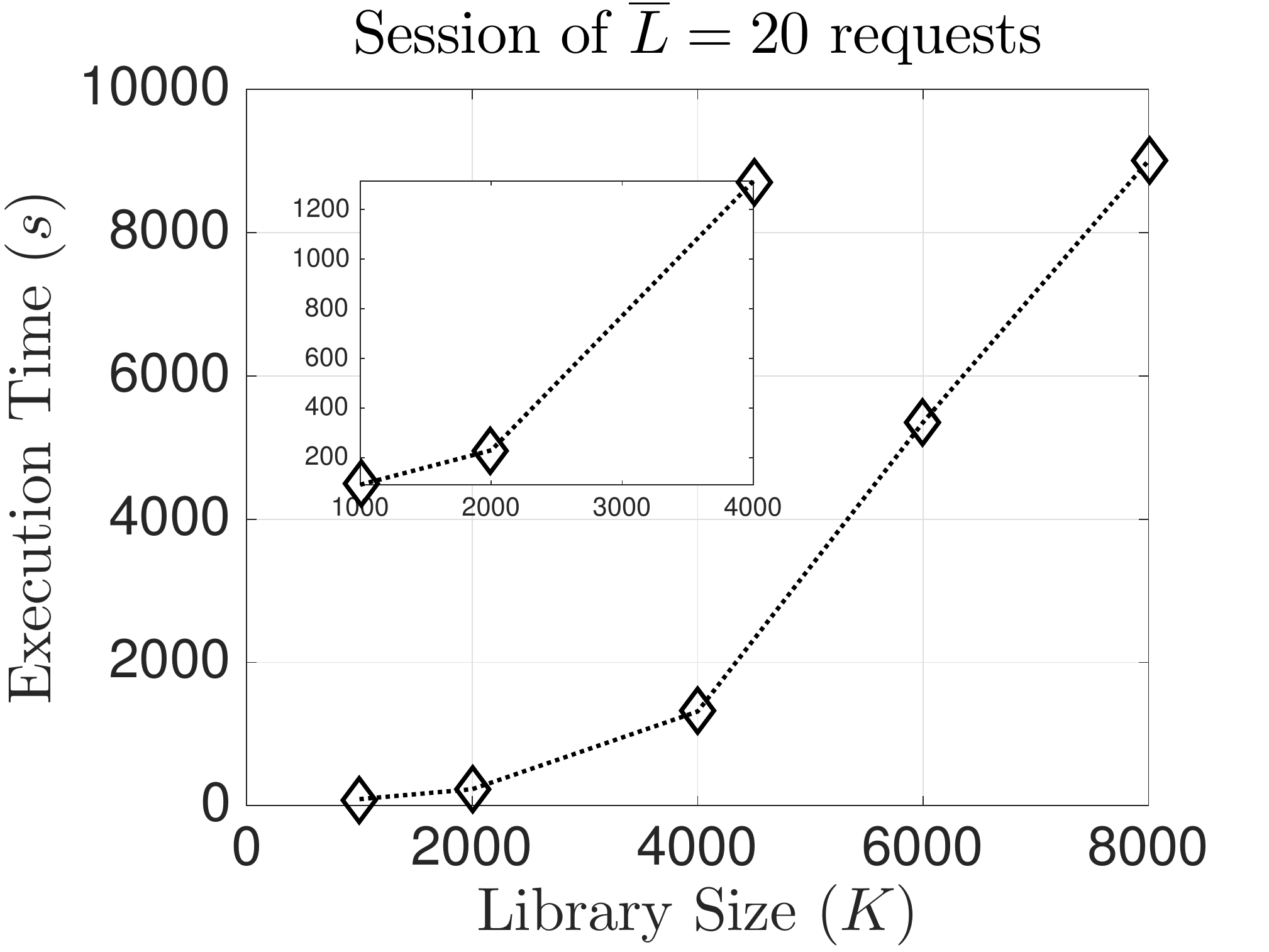}\label{ig:execution-times}}
\caption{MDP: Execution Times}
\label{ex_and_N}
\end{figure}

\subsection{Execution Time Savings (and not only)} 
Up to this point, we investigated tradeoffs between different policies. Yet an important contribution of this work is its computationally efficient framework, which we discuss next.

\myitem{Item-frequency vs Batch-frequency formulation.}
One of the main contributions of this work is that it formulates a continuous problem of item-frequencies, rather than batch-frequencies. This is profitable computationally both in the number of variables and in execution time. 
In Fig. \ref{fig:vanilla}, we choose a catalog of $K=150$, and solve the MDP using our approach and compare it against the brute force solution of a batch-MDP, which enumerates all the feasible tuples (the ones that satisfy the quality constraints), and picks the best one. As claimed in Section \ref{subsec:action-space}, we see that the increase of $N$ is devastating for the batch-frequency MDP, whereas our item-frequency approach is insensitive to it.

\myitem{Catalog Increase.}
In this part, we investigate the execution times of our item-frequency MDP algorithm, by varying the catalog size. We select $\overline{L} = 20$, increase the content library size, run the algorithm, and report the time it took until completion. These results show that our MDP can tune recommendations of practical size and not only toy scenarios of some hundred items. As stated in~\cite{femto}, even a library of $K = 1000$ can be considered practical since it could refer to the 1000 most popular files of Netflix for example. The authors in~\cite{chatzieleftheriou2019TMC} perform simulation with sizes $K = 1000$ and $K = 10000$, which is the same order as our experiments; however, they do not report any execution time results. The MDP needs about 9000 seconds ($\approx$ 2.5h) for a library of 8000, using the 8Gb RAM PC, and under-exploited parallelisation. These run-times will be significantly decreased in a powerful server with multiple cores, as the Policy Iteration algorithm we have implemented runs on as many cores as it finds available.

Here we compare our MDP with the policy of \cite{giannakas2019wiopt} where the objective is to minimize the per request average cost over an \emph{infinite size} session. Their framework easily reduces to ours by setting $\lambda \to 1$ (in practice we set $\lambda = 0.9$, that is 10 steps look-ahead) and assuming $\alpha_{ij} = \frac{\alpha}{N}$, i.e., uniform click towards recommended content. 
In~\cite{giannakas2019wiopt}, the authors formulate the average cost minimization as an LP of size $K^2$ and the optimal solution is found using CPLEX. Their solution is constrained to obey \emph{stationarity} which builds a very demanding set of constraints and is unrealistic as the size of a user session \emph{is finite} in practice. In Table \ref{table:mdp-wowmom}, for the two datasets, we report the execution time of the algorithm and the achieved mean cost $\overline{C}$ under the stationary regime, i.e., we plug our policy into the objective of \cite{giannakas2019wiopt}. In that table, we refer to our policy as MDP(0.9) (due to the selected $\lambda$) and to the one of \cite{giannakas2019wiopt} as \emph{OPT}.
\vspace{5mm}
\begin{table}[h!]
\centering
\caption{MDP(0.99) vs \emph{OPT} (under \cite{giannakas2019wiopt})}\label{table:mdp-wowmom}
\begin{small}
\begin{tabular}{c|c|c|c|c}
{} & \multicolumn{2}{c|}{Cost (units)} &\multicolumn{2}{|c} {Exec. Time ($s$)}\\
\hline
{} & {MDP(0.9)} &{\emph{OPT}} &{MDP(0.9)} &{\emph{OPT}}\\
\hline
{Movielens}	&{5.5625} &{5.5432} &{105} &{560}\\
\hline
{Youtube} &{6.1005} &{6.1001} &{253} &{1997}\\
\end{tabular}
\end{small}
\end{table}
The results of this experiment are summarized in Table~\ref{table:mdp-wowmom}. Impressively, there is an execution time speed up by a factor of 5 and 10 for the two datasets, while sacrificing almost nothing in terms of cost performance.

We now do the exact opposite; for smaller sessions, $\overline{L} = \{1,2,3,4,5\}$, we present the relative gain of MDP over the policy of \cite{giannakas2019wiopt} and the $q$-Mixed. 
Our approach finds the optimal cost for all $\overline{L}$. The smaller the horizon, the bigger the gain of MDP with respect to~\cite{giannakas2019wiopt}. Reasonably, as the horizon increases, the relative gain fades as~\cite{giannakas2019wiopt} is exactly tailored for \emph{very} long sessions.
Note that for such small sessions the MDP rutime is obviously even lower than the one shown in Table~\ref{table:mdp-wowmom}, because smaller $\overline{L}$ translates to smaller $\lambda$, which implies faster convergence of the policy iteration. Finally, as seen in previous plots, the gain over the $q$-Mixed is growing with the horizon $\overline{L}$. 
%
\vspace{5mm}
\begin{table}[h!]
\centering
\caption{Evaluating on smaller $\overline{L}$, Movielens}\label{table:mdp-wowmom2}
\begin{small}
\begin{tabular}{c|c|c|c|c|c}
{Gain over} &{$\overline{L}$=1} &{$\overline{L}$=2} &{$\overline{L}$=3} &{$\overline{L}$=4} &{$\overline{L}$=5}\\
\hline \hline
{\cite{giannakas2019wiopt} ($\%$)}	&{21.29} &{16.27} &{10.4528} &{4.48} &{4.87}\\
\hline
{$q$-Mix ($\%$)} &{0.01} &{9.67} &{21.46} &{22.39} &{27.69}\\
\end{tabular}
\end{small}
\end{table}

\section{Conclusions}
\label{sec:conclusions}

We have developed a very promising MDP framework for optimal look-ahead NFR that is able to exploit the structure of the content graph and discover non-obvious recommendations. 
More importantly, by using item-frequency recommendations in the Bellman equations we have proposed an algorithm that scales well both with the size of content library, as well as with the batch-size. The complexity remains low as the inner optimization problems have less unknowns, they are linear or at worse convex, and allow for parallelisation. 
Finally, as MDP sets the stage for learning-based approaches, we firmly believe that our reduced variable representation MDP (via the item-frequency) can significantly speed up the training phase of such algorithms.

\bibliographystyle{ieeetr}
\bibliography{raporti}


\end{document}